\newenvironment{list1}{\begin{list}{$\bullet$}
{\topsep 0 pt \parsep 0 pt \partopsep 0 pt \itemsep 0
pt}}{\end{list}}
\newtheorem{defi}{Definition}
\newtheorem{theo}{Theorem}
\newtheorem{prop}{Proposition}
\begin{document}
%
\title{A Game-theoretic Approach for Synthesizing Fault-Tolerant \\Embedded Systems}


\author{
\IEEEauthorblockN{Chih-Hong Cheng\authorrefmark{1},
                    Harald Ruess\authorrefmark{2},
                    Alois Knoll\authorrefmark{1}
                    Christian Buckl\authorrefmark{2}}
\IEEEauthorblockA{\authorrefmark{1}
Department of Informatics, Technische Universit\"{a}t M\"{u}nchen, Germany}
\IEEEauthorblockA{\authorrefmark{2}fortiss GmbH, Germany
\\Email:\{chengch,knoll\}@in.tum.de, \{ruess,buckl\}@fortiss.org }}


%


\maketitle

\begin{abstract}
In this paper, we present an approach for fault-tolerant synthesis by combining predefined patterns for fault-tolerance with algorithmic game solving. A non-fault-tolerant system, together with the relevant
fault hypothesis and fault-tolerant mechanism templates in a pool
are translated into a distributed game, and we perform an incomplete search of strategies to cope with undecidability.
The result of the game is translated back to executable code concretizing fault-tolerant mechanisms
using constraint solving.
The overall approach is implemented to a prototype tool chain and is illustrated using examples.
\end{abstract}



%
\IEEEpeerreviewmaketitle

\section{Introduction}

In this paper, we investigate methods to perform automatic fault-tolerant (FT
for short) synthesis under the context of embedded systems, where our goal is to generate executable
code which can be deployed on dedicated hardware platforms.

Creating such a tool supporting the fully-automated process is very challenging as the inherent complexity is high: 
bringing FT synthesis from theory to practice means solving a
problem consisting of (a) interleaving semantics, (b) timing, (c) fault-tolerance, (d) dedicated
features of concrete hardware, and optionally, (e) the code generation framework.
To generate tamable results, we first
constrain our problem space to some simple
yet reasonable scenarios (sec.~\ref{sec.motivating.scenarios}).
Based on these scenarios we can start system modeling (sec.~\ref{sec.system.modeling}) taking into account all above mentioned aspects.

To proceed further, we find it important to observe the approach nowadays to understand the need: for engineers working on
ensuring fault-tolerance of a system,
once the corresponding fault model is decided, a common approach is to select some
fault-tolerant patterns~\cite{hanmer:2007:patterns} (e.g., fragments of executable code) from a pattern pool. Then
engineers must fine-tune these mechanisms, or fill in unspecified information in the
patterns to make them work as expected. With the above scenario in mind, apart from generating complete FT mechanisms from specification,
our synthesis technique emphasizes automatic selection of predefined FT patterns and automatic tuning such that
details (e.g., timing) can be filled without human intervention.
This also reduces a potential problem where unwanted FT mechanisms are synthesized due to under-specification.
Following the statement, we translate the system model, the fault hypothesis, and the set of available FT patterns
into a distributed game~\cite{mohalik:2003:distributed} 
(sec.~\ref{sec.frontend.translation}), and a
strategy generated by the game solver can be interpreted as a selection of FT patterns together with guidelines of tuning.

For games, it is known that solving distributed games is, in most cases, undecidable~\cite{mohalik:2003:distributed}.
To cope with undecidability, we restrict ourselves to the effort of finding positional strategies (mainly for reachability games).
We argue that finding positional strategies is still practical, as the selected FT patterns may introduce
additional memory during game creation.
Hence, a positional strategy (by pattern selection) combined with selected FT patterns generates mechanisms using memory.
By posing this restriction, the problem of finding a strategy of the game (for control) is NP-Complete (sec.~\ref{sec.game.solving}),
and searching techniques (e.g., SAT translation or combining forward search with BDD) are thus applied to assist the finding of solutions.


The final step of the automated process is to translate the result of synthesis
back to concrete implementation:
the main focus is to ensure that
the newly synthesized mechanisms do not change the implementability  of the original system (i.e., the new system is schedulable).
Based on our modeling framework, this problem can be translated to a linear constraint system, which can be solved efficiently
by existing tools.


To evaluate our methods, we have created our prototype software,
which utilizes the model-based approach to facilitate the design, synthesis, and code generation for fault-tolerant embedded systems.
We demonstrate two small yet representative examples with our tool for a proof-of-concept (sec.~\ref{sec.case.study});
these examples indicate the applicability of the approach.
Lastly, we conclude this paper with an overview of related work (sec.~\ref{sec.related.work}) and a brief summary including the flow of our approach (sec.~\ref{sec.concluding.remarks}).







\section{Motivating Scenario\label{sec.motivating.scenarios}}

\subsection{Adding FT Mechanisms to Resist Message Loss\label{subsec.example.adding.ft}}

We give a motivating scenario in embedded systems to facilitate our mathematical definitions.
The simple system described in Figure~\ref{fig:Motivating.Example} contains two \emph{processes} $\mathcal{A}$, $\mathcal{B}$
and one bidirectional \emph{network} $\mathcal{N}$.
Processes $\mathcal{A}$ and $\mathcal{B}$ start executing sequential actions together with a looping period of $100 ms$. 
In each period, $\mathcal{A}$ first reads an input using a sensor to variable $m$, followed by sending the result to the network $\mathcal{N}$ using the action $\verb"MsgSend(m)"$, and outputing the value (e.g., to a log).

\begin{figure}[t]
\centering
 \includegraphics[width=0.7\columnwidth]{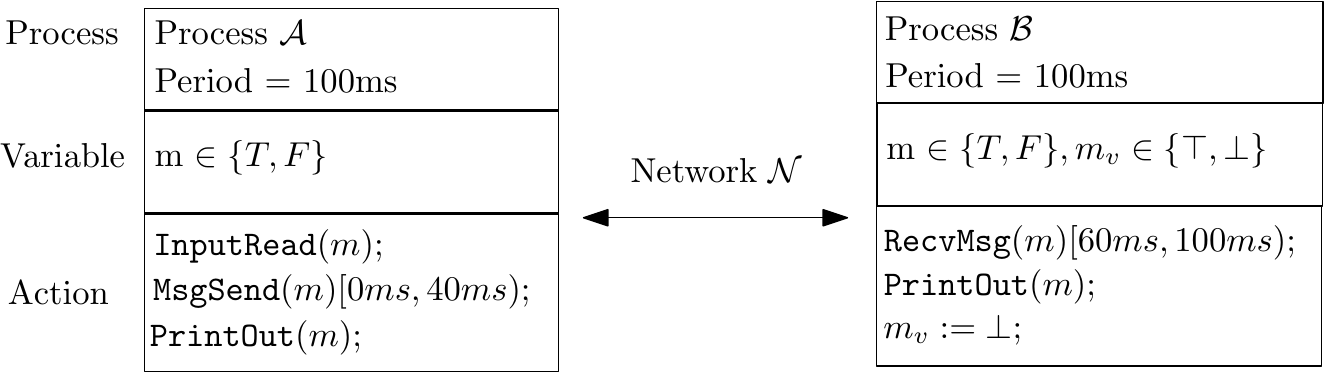}
  \caption{An example for two processes communicating over an unreliable network.}
 \label{fig:Motivating.Example}
\end{figure}


In process $\mathcal{A}$, for the action $\verb"MsgSend(m)"$, a message containing value of $m$ is forwarded to $\mathcal{N}$, and $\mathcal{N}$ broadcasts the value to all other processes which contain a variable named $m$, and set the variable $m_v$ in $\mathcal{B}$ as $\top$ (indicating that the content is valid).
However, $\mathcal{A}$ is unaware whether the message has been sent successfully: the network component $\mathcal{N}$ is unreliable, which has a faulty behavior of \emph{message loss}.  The fault type and the frequency of the faulty behavior are specified in the \emph{fault model}:
in this example for every complete period ($100 ms$), at most one message loss can occur.

In $\mathcal{B}$, its first action $\verb"RecvMsg(m)"$ has a property describing an interval $[60, 100)$, which specifies the \emph{release time} and \emph{deadline} of this action to be $60 ms$ and $100 ms$, respectively. By posing the release time and the deadline, in this example, $\mathcal{B}$ can finalize its decision whether it has received the message $m$ successfully using the equality constraint $(m_v = \bot)$, provided that the time interval $[40, 60)$ between (a) deadline of $\verb"MsgSend(m)"$ and (b) release time of $\verb"RecvMsg(m)"$ overestimates the \emph{worst case transmission time} for a message to travel from $\mathcal{A}$ to $\mathcal{B}$. After $\verb"RecvMsg(m)"$, it outputs the received value (e.g., to an actuator).

Due to the unreliable network, it is easy to observe that two output values may not be the same.
Thus the \emph{fault-tolerant synthesis} problem in this example is to
perform suitable modification on $\mathcal{A}$ and $\mathcal{B}$, such that two output values from $\mathcal{A}$ and $\mathcal{B}$
are the same at the end of the period, regardless of the disturbance from the network.

\subsection{Solving Fault-Tolerant Synthesis by Instrumenting Primitives}

To perform FT synthesis in the example above, our method is to introduce several slots (the size of slots are fixed by the designer)
between actions originally specified
in the system. For each slot, an atomic operation can be instrumented, and these actions are
among the pool of predefined \emph{fault-tolerant primitives},
consisting of message sending, message receiving, local variable modifications, or \verb"null-op"s.
Under this setting we have created a game, as the original transitions in the fault-intolerant system combined with all FT
primitives available constitute the controller (player-0) moves, and the triggering of faults and the networking can be modeled as environment (player-1) moves.


\section{System Modeling\label{sec.system.modeling}}

\subsection{Platform Independent System Execution Model\label{sub.sec.PISEM}}

We first define the execution model where timing information is included;
it is used for specifying embedded systems and is
linked to our code-generation framework. In the definition, for ease of understanding we also give each term intuitive explanations.

\begin{defi}
Define the syntax of the \textbf{Platform-Independent System Execution Model (PISEM)} be $\mathcal{S} = (\mathcal{A}, \mathcal{N}, \mathcal{T})$.
\begin{list1}
\item $\mathcal{T} \in \mathbf{Q}$ is the replication period of the system.
\item $\mathcal{A} = \bigcup_{i = 1\ldots n_A} \mathcal{A}_i$ is the set of processes, where in $\mathcal{A}_{i} = (V_{i} \cup V_{env_i}, \overline{\sigma_i})$,
    \begin{itemize}
        \item $V_i$ is the set of variables, and $V_{env_i}$ is the set of environment variables. For simplicity assume that $V_i$ and $V_{env_i}$ are of integer domain.
        \item  $\overline{\sigma_i} := \sigma_1 [\alpha_1, \beta_1); \ldots; \sigma_j [\alpha_j, \beta_j); \ldots ; \sigma_{k_i} [\alpha_{k_i}, \beta_{k_i})$ is a sequence of actions.
           \begin{itemize}
                \item $\sigma_j := \verb"send"(pre, index, n, s, d, v, c)\; | \; a \leftarrow \verb"e" \; |\; \verb"receive"(pre, c)$ is an atomic action (action pattern), where
                     \begin{itemize}
                         \item $a, c\in V_i$,
                         \item $\verb"e"$ is function from $V_{env_x} \cup V_i$ to $V_{i}$ (this includes \verb"null-op"),
                         \item $pre$ is a conjunction of over equalities/inequalities of variables,
                         \item $s, d \in \{1, \ldots, n_A\}$ represents the source and destination,
                         \item $v \in V_d$ is the variable which is expected to be updated in process $d$,
                         \item $n \in \{1, \ldots, n_N\}$ is the network used for sending, and
                        \item $index \in \{1,\ldots, size_n\}$ is the index of the message used in the network.
                    \end{itemize}
                 \item $[\alpha_j, \beta_j)$ is the execution interval, where
                          $\alpha_j \in \mathbf{Q}$ is the release time and
                          $\beta_j \in \mathbf{Q}$ is the deadline.
            \end{itemize}
    \end{itemize}
\item $\mathcal{N} = \bigcup_{i = 1\ldots n_N} \mathcal{N}_i$, $\mathcal{N}_i = (\mathcal{T}_i, size_i)$ is the set of network.
    \begin{itemize}
        \item $\mathcal{T}_i: \mathbf{N} \rightarrow \mathbf{Q}$ is a function which maps the index (or priority) of a message to the worst case message transmission time (WCMTT).
        \item $size_i$ is the number of messages used in $\mathcal{N}_i$.
    \end{itemize}
\end{list1}
\end{defi}


\noindent \textbf{[Example]} Based on the above definitions, the system under
execution in section~\ref{subsec.example.adding.ft} can be easily modeled by PISEM:
let $\mathcal{A}$, $\mathcal{B}$, and $\mathcal{N}$ in section~\ref{subsec.example.adding.ft} be renamed in a PISEM as $\mathcal{A}_1$, $\mathcal{A}_2$,
and $\mathcal{N}_1$. For simplicity, we use $\mathcal{A}.j$ to represent the variable $j$ in process $\mathcal{A}$, assume that the network transmission time is $0$, and let $v_{env}$ contain only one variable $v$ in $\mathcal{A}_1$.
Then in the modeled PISEM, we have $\mathcal{N}_1 = (f:\mathbf{N}\rightarrow0, 1)$, $\mathcal{T} = 100$, and the action sequence of process $\mathcal{A}_1$ is
\begin{small}
\[
m \leftarrow \verb"InputRead"(v) [0, 40); \verb"send"(true, 1, 1, 1, 2, m, \mathcal{A}_1.m)[0, 40); v \leftarrow \verb"PrintOut"(m) [40, 100);
\]
\end{small}
For convenience, we use
 $|\overline{\sigma_{i}}|$ to represent the length of the action sequence $\overline{\sigma_{i}}$, $\sigma_j.deadline$ to represent
 the deadline of $\sigma_j$, and
 $iSet(\overline{\sigma_i})$ to represent a set containing (a) the set of subscript numbers in $\overline{\sigma_i}$ and (b) $|\overline{\sigma_i}| +1$, i.e., $\{1,\ldots,k_i, k_i +1\}$.

\begin{defi}
The configuration of $\mathcal{S}$ is $(\bigwedge_{i = 1 \ldots n_A} (v_i, v_{env_i}, \Delta_{next_i}),
\bigwedge_{j = 1 \ldots n_N} (occu_j, s_j, d_j, var_j, c_j, t_j, ind_j), t)$, where
    \begin{itemize}
        \item $v_i$ is the set of the current values for the variable set $V_i$,
        \item $v_{env_i}$ is the set of the current values for the variable set $V_{env_i}$,
        \item $\Delta_{next_i} \in [1, |\overline{\sigma_i}|+1 ]$ is the next atomic action index taken in
                $\overline{\sigma_i}$\footnote{Here an interval $[1, |\overline{\sigma_i}|+1 ]$ is used for the introduction of FT mechanisms described later.},
        \item $occu_j \in \{\verb"false",\verb"true"\}$ is for indicating whether the network is busy,
        \item $s_j, d_j \in \{1,\ldots, n_A\}$,
        \item $var_j \in \bigcup_{i = 1,\ldots, n_A} (V_{i} \cup V_{env_i})$,
        \item $c_j \in\mathbf{Z}$ is the content of the message,
        \item $ind_j \in \{1,\ldots, size_j\}$ is the index of the message occupied in the network,
        \item $t_j$ is the reading of the clock used to estimate the time required for transmission,
        \item $t$ is the current reading of the global clock.
    \end{itemize}
\end{defi}


The change of configuration is caused by the following operations.
\begin{enumerate}
    \item (\emph{Execute local action}) For machine $i$, let $s$ and $j$ be the current configuration for $var$ and $\Delta_{next_i}$, and $v_i$, $v_{env_i}$ are current values of $V_i$ and $V_{env_i}$. If $j = |\overline{\sigma_{i}}|+1$ then do nothing (all actions in $\overline{\sigma_i}$ have been executed in this cycle); else the action $\sigma_j := var \leftarrow \verb"e" [\alpha_j, \beta_j)$ updates $var$ from $s$ to $\verb"e"(v_i, v_{env_i})$, and changes $\Delta_{next_i}$ to $min\{x| x \in iSet(\overline{\sigma_i}), x > j \}$. This action should be executed between the time interval $t \in [\alpha_j, \beta_j)$.

    \item (\emph{Send to network}) For machine $i$, let $s$ and $j$ be the current configuration for $var$ and $\Delta_{next_i}$. If $j = |\overline{\sigma_{i}}|+1$ then do nothing; else  the action $\sigma_j := \verb"send"(pre, index, n, s, d, v, c)[\alpha_j, \beta_j)$ should be processed between the time interval $t \in [\alpha_j, \beta_j)$, and changes $\Delta_{next_i}$ to $min\{x| x \in iSet(\overline{\sigma_i}), x > j \}$.
        \begin{list1}
            \item When $pre$ is evaluated to true (it can be viewed as an \verb"if" statement), it then checks the condition $occu_n = false$: if the condition holds, it updates network $n$ with value $(occu_n, s_n, d_n, var_n, c_n, t_n, ind_n):=(true, i, d, v, c, 0, index)$. Otherwise it blocks until the condition holds.
            \item When $pre$ is evaluated to false, it skips the sending.
        \end{list1}

    \item (\emph{Process message}) For network $j$, for configuration $(occu_j, s_j, d_j, var, c_j, t_j, ind_j)$ if $occu_j = true$, then during $t_j < \mathcal{T}_j(ind_j)$, a transmission occurs, which updates $occu_j$ to \verb"false", $A_{d_j}.var$ to $c_j$, and $A_{d_j}.var_{v}$  to \verb"true".

    \item (\emph{Receive}) For machine $i$, let $s$ and $j$ be the current configuration for $c$ and $\Delta_{next_i}$. If $j = |\overline{\sigma_{i}}|+1$ then do nothing; else for $\verb"receive"(pre, c)[\alpha_j, \beta_j)$ in machine $i$, it is processed between the time interval $t \in [\alpha_j, \beta_j)$ and changes $\Delta_{next_i}$ to $min\{x| x \in iSet(\overline{\sigma_i}), x > j \}$\footnote{In our formulation,
        the \texttt{receive}$(pre, c)$ action can be viewed as a syntactic sugar of \texttt{null-op};
        its purpose is to facilitate the matching of send-receive pair with variable $c$.}.

    \item (\emph{Repeat Cycle}) 
    When $t = \mathcal{T}$, $t$ is reset to $0$, and for all $x\in\{1, \ldots, n_A\}$, $\Delta_{next_x}$ are reset to $1$.

\end{enumerate}

Notice that by using this model to represent the embedded system under analysis, we make the following assumptions:
\begin{itemize}
\item \emph{All processes and networks in $\mathcal{S}$ share a globally synchronized clock}.
        Note that this assumption can be fulfilled in many hardware platforms, e.g., components
        implementing the IEEE 1588~\cite{eidson2002ieee} protocol.
\item  For all actions $\sigma$, $\sigma.deadline < \mathcal{T}$; for all send actions $\sigma := \verb"send"(pre, index, n, s, d, v, c)$, $\sigma.deadline + \mathcal{T}_n(index)< \mathcal{T}$, i.e., all processes and networks should finish its work within one complete cycle.
\end{itemize}

\subsection{Interleaving Model (IM)}

Next, we establish the idea of interleaving model (IM) which is used to offer an intermediate representation to
bridge PISEM and game solving, such that (a) it captures the execution semantics of PISEM without explicit statements of timing, and
(b) by using this model it is easier to connect to the standard representation of games.

\begin{defi}
Define the syntax of the \textbf{Interleaving Model (IM)} be $S_{IM} = (A, N)$.
\begin{list1}
\item $A = \bigcup_{i = 1\ldots n_A} A_i$ is the set of processes, where in $A_{i} = (V_{i} \cup V_{env_i}, \overline{\sigma_i})$,
    \begin{itemize}
        \item $V_i$ is the set of variables, and $V_{env_i}$ is the set of environment variables.
        \item  $\overline{\sigma_i} := \sigma_1 [\wedge_{m = 1 \ldots n_A} [pc_{1,m_{low}}, pc_{1,m_{up}})]; \ldots; \sigma_j [\wedge_{m = 1 \ldots n_A} [pc_{j, m_{low}}, pc_{j, m_{up}})]; \ldots ; \sigma_{k_i}[\wedge_{m = 1 \ldots n_A} [pc_{k_i, m_{low}}, pc_{k_i, m_{up}})]$ is a fixed sequence of actions.
            \begin{itemize}
                \item $\sigma_j := \verb"send"(pre, index, n, s, d, v, c)\; | \;\verb"receive"(pre, c)\; | \;a \leftarrow \verb"e"  $ is an atomic action, where $a, c, e, pre, v, n,\\ s, d$ are defined similarly as in PISEM.

                \item For $\sigma_j$, $\forall m \in \{1, \ldots, n_A\}$, $pc_{j, m_{low}},pc_{j, m_{up}} \in \{1, \ldots, |\overline{\sigma_m}|+2 \}$ is the lower and the upper bound (PC-precondition interval) concerning
                \begin{enumerate}
                    \item  precondition of program counter in machine $k$, when $m\neq i$.
                    \item  precondition of program counter for itself, when $m = i$.
                \end{enumerate}

            \end{itemize}
    \end{itemize}
\item $N = \bigcup_{i = 1\ldots n_N} N_i$, $N_i = (T_i, size_i)$ is the set of network.
    \begin{itemize}
        \item $T_i: \mathbf{N} \rightarrow \bigwedge_{m = 1 \ldots n_A} (\{1, \ldots, |\overline{\sigma_m}|+2 \}, \{1, \ldots, |\overline{\sigma_m}|+2 \})$ is a function which maps the index (or priority) of a message to the PC-precondition interval of other processes.
        \item $size_i$ is the number of messages used in $\mathcal{N}_i$.
    \end{itemize}
\end{list1}
\end{defi}

\begin{defi}
The configuration of $S_{IM}$ is $(\bigwedge_i (v_i, v_{env_i}, \Delta_{next_i}), \bigwedge_j (occu_j, s_j, d_j, c_j))$, where
$v_i, v_{env_i}, \Delta_{next_i}, occu_j, \\s_j, d_j, c_j$ are defined similarly as in PISEM.
\end{defi}

The change of configurations in IM can be interpreted analogously to PISEM; we omit details here but mention three differences:
\begin{enumerate}
    \item For an action $\sigma_j$ having the precondition $[\wedge_{m = 1 \ldots n_A} [pc_{j, m_{low}}, pc_{j, m_{up}})]$, it should be executed between $pc_{j, m_{low}} \leq \Delta_{next_m} < pc_{j, m_{up}}$, for all $m$.
    \item For processing a message, constraints concerning the timing of transmission
            in PISEM are replaced by referencing the PC-precondition interval of other processes in IM, similar to $1$.
    \item The system repeats the cycle when  $\forall x \in \{1, \ldots, n_A\}$, $\Delta_{next_x} = |\overline{\sigma_x}| + 1$ and $\forall x \in \{1, \ldots, n_N\}$, $occu_x = \verb"false"$.
\end{enumerate}

\section{Games}

For the proof of complexity results, we use similar notations in~\cite{mohalik:2003:distributed} to define a distributed game.
Intuitively, distributed games are games formulating multiple processes with no interactions among themselves but only with the environment.

\subsection*{(Local) Games\label{subsec.local.games}}
\newcommand{\Occ}{\ensuremath{\textrm{Occ}}}
\newcommand{\Inf}{\ensuremath{\textrm{Inf}}}
\newcommand{\attr}{\ensuremath{\textrm{attr}}}
\newcommand{\Attr}{\ensuremath{\textrm{Attr}}}
\newcommand{\N}{\mathbf{N}}
A {\em game graph} or {\em arena} is a directed graph $G=(V_0 \uplus V_1,E)$
whose nodes are partitioned into two classes $V_0$ and $V_1$. We only consider the case of two players in
the following and call them player $0$ and player $1$ for simplicity.
A {\em play} starting from node $v_0$ is simply a maximal path $\pi=v_0 v_1 \ldots$ in $G$
where we assume that player $i$ determines the {\em move} $(v_k,v_{k+1})\in E$ if $v_k\in V_i$ ($i\in\{0,1\}$).
With $\Occ(\pi)$ we denote the set of nodes visited by a play $\pi$.
A {\em winning condition} defines when a given play $\pi$ is {\em won} by player $0$;
if $\pi$ is not won by player $0$, it is won by player $1$.
A node $v$ is won by player $i$ if player $i$ can always
choose his moves in such a way that he wins any resulting play starting from $v$.

\subsection*{Distributed Games\label{subsec.distributed.games}}
We use 
notations by Mohalik and Walukiewicz~\cite{mohalik:2003:distributed} to define a distributed game.
From now on we call the a game graph defined in sec.~\ref{subsec.local.games} a \emph{local game graph}.
\begin{defi}
For all $i \in \{1,\ldots, n\}$, let
$G_i = (V_{0_i} \uplus V_{1_i}, E_i )$ be a local game graph with the restriction that it is bipartite. Define a \emph{distributed game} to be
$\mathcal{G} = (\mathcal{V}_0 \uplus \mathcal{V}_1, \mathcal{E}, Acc \subseteq (\mathcal{V}_{0} \uplus \mathcal{V}_{1})^{\omega})$:
\begin{itemize}
    \item $\mathcal{V}_1 = V_{1_1} \times \ldots \times V_{1_n}$ is the set of player $1$ (environment) vertices.
    \item $\mathcal{V}_0 = (V_{0_1} \uplus V_{1_1}) \times \ldots \times (V_{0_n} \uplus V_{1_n}) \setminus \mathcal{V}_1$ is the set of player $0$ (control) vertices.
            \begin{itemize}
                \item For a vertex $x=(x_1, \ldots, x_n)$, we use the function $proj(x, i)$  to retrieve the $i$-th component $x_i$, and use $proj(X, i)$ to retrieve the $i$-th component for a set of vertices $X$.
            \end{itemize}
    \item Let $(x_1, \ldots, x_n), (x'_1, \ldots, x'_n) \in \mathcal{V}_0 \uplus \mathcal{V}_1$, then define $\mathcal{E}$ as follows:
    \begin{itemize}


        \item If $(x_1, \ldots, x_n) \in \mathcal{V}_0$,  $((x_1, \ldots, x_n), (x'_1, \ldots, x'_n)) \in \mathcal{E}$ if and only if \\$\forall i. ( x_i \in V_{0_i} \rightarrow (x_i, x_i') \in E_i)\wedge\forall j. \;
            (x_j \in V_{1_j} \rightarrow x_j = x_j')$.
         \item For $(x_1, \ldots, x_n) \in \mathcal{V}_1$, if $((x_1, \ldots, x_n), (x'_1, \ldots, x'_n)) \in \mathcal{E}$,
         then for every $x_i$, either $x_i = x'_i$ or $x'_i \in V_{0_i}$, and moreover $(x_1, \ldots, x_n) \neq (x'_1, \ldots, x'_n)$.
    \end{itemize}
    \item $Acc$ is the acceptance condition.
\end{itemize}
\end{defi}


In a distributed game $\mathcal{G} = (\mathcal{V}_0 \uplus \mathcal{V}_1, \mathcal{E}, Acc)$, a play is defined analogously as defined in local games: a \emph{play} starting from node $v_0$ is a maximal path $\pi=v_0 v_1 \ldots$ in $\mathcal{G}$ where player $i$ determines the \emph{move} $(v_k,v_{k+1})\in \mathcal{E}$ if $v_k\in \mathcal{V}_i$ ($i\in\{0,1\}$).

A \emph{distributed strategy} of a distributed game for player $0$ is a tuple of functions $\xi = \langle f_1, \ldots, f_n\rangle$, where each function $f_i : (V_{0_i} \uplus V_{1_i})^{*}\times V_{0_i} \rightarrow (V_{0_i} \uplus V_{1_i})$ is a local strategy which decides the updated location of the local game $i$ based on (a) its observable history of local game $i$ and (b) current position of local game $i$. Lastly, we call a distributed strategy \emph{positional}, if $f_i$ is a function mapping from $V_{0_i}$ to $V_{0_i} \uplus V_{1_i}$, i.e., the update of location depends only on the current position of local game.

\begin{defi}
A distributed game $\mathcal{G} = (\mathcal{V}_0 \uplus \mathcal{V}_1, \mathcal{E}, Acc)$ is reachability-winning by a distributed strategy $\xi = \langle f_1, \ldots, f_n\rangle$ over initial states $V_{ini} \in \mathcal{V}_0 \uplus \mathcal{V}_1$ and target states $V_{goal} \in \mathcal{V}_0 \uplus \mathcal{V}_1$,
when the following conditions hold:
\begin{itemize}
    \item $Acc = \{v_0v_1\ldots \in (\mathcal{V}_0 \uplus \mathcal{V}_1)^{\omega}\;|\;\Occ(v_0v_1\ldots) \cap V_{goal} \neq\emptyset\}$.
    \item For every play $\pi = v_0 v_1 v_2, \ldots$ where $v_0 \in V_{ini}$, player $0$ wins $\pi$ when the following constraints hold:
        \begin{itemize}
            \item $\pi \in Acc$.
            \item $\forall i \in \mathbf{N}_0.\;( v_i \in \mathcal{V}_0 \rightarrow (\forall j \in \{1,\ldots, n\}.\; ( proj(v_{i}, j) \in V_{0_j} \rightarrow proj(v_{i+1}, j) = proj(f_j(v_i), j))))$.
        \end{itemize}
\end{itemize}
\end{defi}

\section{Step A: Front-end Translation from Models to Games\label{sec.frontend.translation}}

\subsection{Step A.1: From PISEM to IM}

\begin{algorithm}[t]
\DontPrintSemicolon
\KwData{PISEM model $\mathcal{S} = (\mathcal{A}, \mathcal{N}, \mathcal{T})$}
\KwResult{Two maps $mapLB$, $mapUB$ which map from an action $\sigma$ (or a msg processing by network) to two integer arrays $lower[1\ldots n_A]$, $upper[1\ldots n_A]$}
\Begin{
    /* Initial the map for recording the lower and upper bound for action */\;
    \For{action $\sigma_k$ in $\mathcal{A}_i$ of $\mathcal{A}$} {
        $mapLB.put$($\sigma_k$, \textbf{new int}[1\ldots $n_A$](1)) /* Initialize to $1$ */\;
        $mapUB.put$($\sigma_k$, \textbf{new int}[1\ldots $n_A$])\;
        \lFor{$\mathcal{A}_j \in \mathcal{A}$} { $mapUB.get$($\sigma_k$)[j] := $|\overline{\sigma_j}| +2$  /* Initialize to upperbound */}\;
        $mapLB.get$($\sigma_k$)[i] = k;  $mapUB.get$($\sigma$)[i] = k+1; /* self PC */\;
    }
    \For{action $\sigma_m$ in $\mathcal{A}_i$ of $\mathcal{A}$, $m = 1,\ldots, |\overline{\sigma_i}|$} {
        \For{action $\sigma_n$ in $\mathcal{A}_j$ of $\mathcal{A}$, $n = 1,\ldots, |\overline{\sigma_j}|$ , $j \neq i$}{
            \nl\If {$\sigma_m.releaseTime > \sigma_n.deadline$}{\label{InRes1}
                $mapLB.get$($\sigma_m$)[j] := \textbf{max}\{$mapLB.get$($\sigma_m$)[j], $n + 1$\}
            }
            \nl\If {$\sigma_m.deadline < \sigma_n.releaseTime$}{\label{InRes2}
               $mapUB.get$($\sigma_m$)[j] := \textbf{min}\{$mapUB.get$($\sigma_m$)[j], $n + 1$\};
            }
        }
    }
    /* Initialize the map for recording the lower and upper bound for msg transmission */\;
    \For{action $\sigma_k = \texttt{send}(pre, ind, n, s, d, v, c)$ in $\mathcal{A}_i$ of $\mathcal{A}$} {
        $mapLB.put$($n.ind$, \textbf{new int}[1\ldots $n_A$](1))  /* Initialize to $1$ */\;
        $mapLB.get$($n.ind$)[i] := k+1 /* Strictly later than executing \texttt{send}() */\;
        $mapUB.put$($n.ind$, \textbf{new int}[1\ldots $n_A$])\;
        \lFor{$\mathcal{A}_j \in \mathcal{A}$} { $mapUB.get$($n.ind$)[j] := $|\overline{\sigma_j}| +2$  /* Initialize to upperbound */}\;
    }
    \For{action $\sigma_k = \texttt{send}(pre, ind, n, s, d, v, c)$ in $\mathcal{A}_i$ of $\mathcal{A}$} {
        \For{action $\sigma_m$ in $\mathcal{A}_j$ of $\mathcal{A}$, $n = 1,\ldots, |\overline{\sigma_j}|$}{ 
            \nl\If {$\sigma_k.releaseTime  + 0> \sigma_m.deadline$}{\label{InRes3}
                $mapLB.get$($n.ind$)[j] := \textbf{max}\{$mapLB.get$($n.ind$)[j], $m + 1$\}
            }
            \nl\If {$\sigma_k.deadline +  \mathcal{T}_n(ind) < \sigma_m.releaseTime$}{\label{InRes4}
               $mapUB.get$($n.ind$)[j] := \textbf{min}\{$mapUB.get$($n.ind$)[j], $m + 1$\};
            }
        }
    }
}
\caption{GeneratePreconditionPC\label{algo.generate.PC}}
\end{algorithm}

To translate from PISEM to IM, the key is to generate abstractions 
from the release time and the deadline information specified in PISEM.
As in our formulation, the system is equipped with a globally synchronized clock, the execution of actions respecting the release time and the
deadline can be translated into a partial order.
Algorithm~\ref{algo.generate.PC} concretizes this idea
by generating PC-intervals in all machines as
\begin{itemize}
    \item temporal preconditions for an action to execute, or
    \item temporal preconditions for a network to finish its message processing, i.e., to update a variable in the destination process with the value in the message\footnote{Here we assume that in each period, for all $\mathcal{N}_j$, each message of type $ind\in\{1,\ldots, size_j\}$ is sent at most once. In this way, the algorithm can assign an unique PC-precondition interval for every message type.}.
\end{itemize}

\begin{figure}
\centering
 \includegraphics[width=0.7\columnwidth]{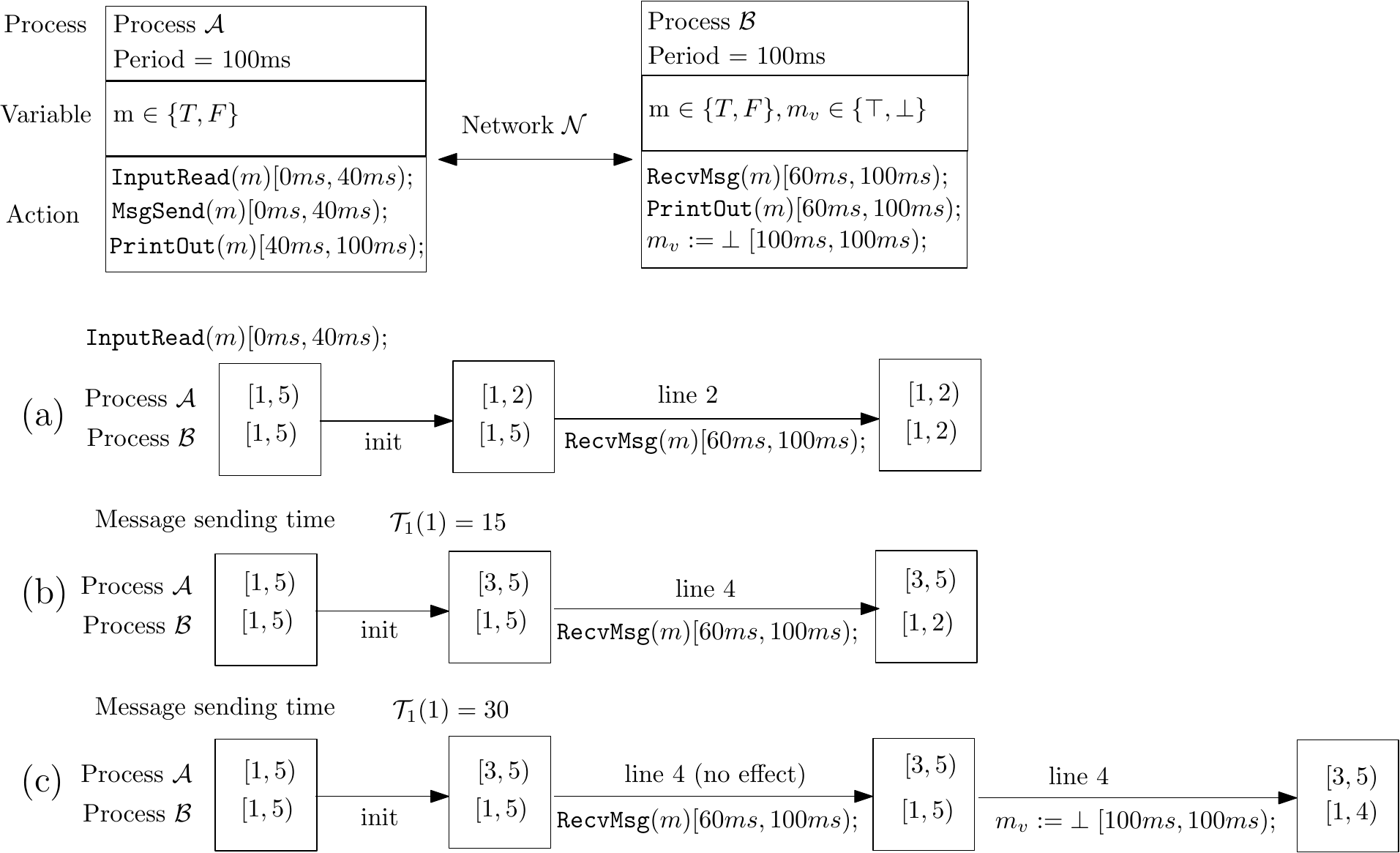}
  \caption{An illustration for Algorithm~\ref{algo.generate.PC}.}
 \label{fig:Algorithm1}
\end{figure}

%
%

Starting from the initialization where no PC is constrained,
the algorithm performs a restriction process using four if-statements \{\textbf{(1)}, \textbf{(2)}, \textbf{(3)}, \textbf{(4)}\} listed.
\begin{itemize}
    \item In (1), if $\sigma_m.releaseTime > \sigma_n.deadline$, then before $\sigma_m$ is executed, $\sigma_n$ should have been executed.
    \item In (2), if $\sigma_m.deadline < \sigma_n.releaseTime$, then $\sigma_n$ should not be executed before executing $\sigma_m$.
    \item Similar analysis is done with (3) and (4). However, we need to consider the combined effect together
            with the network transmission time: we use $0$ to represent the best case, and $\mathcal{T}_n(ind)$ for the worst case.
\end{itemize}



\noindent \textbf{[Example]}
For the example in sec.~\ref{sec.motivating.scenarios}, consider the action
\begin{small}$\sigma_1 := m \leftarrow \verb"InputRead"(v)[0, 40)$\end{small} in $\mathcal{A}_1$ of a PISEM. Algorithm~\ref{algo.generate.PC} returns $mapLB(\sigma)$ and $mapUB(\sigma)$ with two arrays $[1,1]$ and $[2,2]$, indicated in Figure~\ref{fig:Algorithm1}a.
Based on the definition of IM, $\sigma_1$ should be
executed with the temporal precondition that no action in $\mathcal{A}_2$ is executed,
satisfying the semantics originally specified in PISEM.
For the analysis of message sending time, two cases are listed in Figure~\ref{fig:Algorithm1}b and Figure~\ref{fig:Algorithm1}c, where the WCMTT is estimated as 15ms and 30ms, respectively.

\subsection{Step A.2: From IM to Distributed Game}

Here we give main concepts how a game is created after step A.1 is executed.
To create a distributed game from a given interleaving model $S_{IM} = (A, N)$, we need
to proceed with the following three steps:

\subsubsection{Step A.2.1: Creating non-deterministic timing choices for existing actions}
During the translation from a PISEM $\mathcal{S} = (\mathcal{A}, \mathcal{N}, \mathcal{T})$ to its corresponding IM $S_{IM} = (A, N)$,
for all process $\mathcal{A}_i$ in $\mathcal{A}$, for every action $\sigma [\alpha, \beta)$ where $\sigma [\alpha, \beta) \in \overline{\sigma_i}$, algorithm $1$
creates the PC-precondition interval $[\wedge_{m = 1 \ldots n_A} [pc_{m_{low}}, pc_{m_{up}})]$ of other processes. Thus in the corresponding game,
for $\sigma [\wedge_{m = 1 \ldots n_A} [pc_{m_{low}}, pc_{m_{up}})]$, each element $\sigma [\wedge_{m = 1 \ldots n_A} (pc_{m})]$,
where $pc_{m_{low}} \leq pc_{m} < pc_{m_{up}}$, is a nondeterministic transition choice which can be selected separately by the game engine.

\subsubsection{Step A.2.2: Introducing fault-tolerant choices as $\sigma_{\frac{a}{b}}$}

In our framework, fault-tolerant mechanisms are similar to actions, which consist of two parts:
\emph{action pattern} $\sigma$ and \emph{timing precondition} $[\wedge_{m = 1 \ldots n_A} [pc_{m_{low}}, pc_{m_{up}})]$.
Compared to existing actions where nondeterminism comes from timing choices, for fault-tolerance
transition choices include all combinations from (1) timing precondition and (2) action patterns available from a predefined pool.

We use the notation $\sigma_{\frac{a}{b}}$, where $\frac{a}{b} \in \mathbf{Q} \backslash \mathbf{N}$, to represent
an inserted action pattern between $\sigma_{\lfloor\frac{a}{b}\rfloor}$ and $\sigma_{\lceil\frac{a}{b}\rceil}$. With this formulation,
multiple FT mechanisms can be inserted within two consecutive actions $\sigma_i$, $\sigma_{i+1}$ originally in the system, and
the execution semantic follows what has been defined previously: as executing an action updates $\Delta_{next_i}$ to
$min\{x| x \in iSet(\overline{\sigma_i}), x > j \}$,
updating to a rational value is possible. Note that as $\sigma_{\frac{a}{b}}$ is only a fragment without temporal preconditions,
we use algorithm~\ref{algo.decide.FT.timing} to generate all possible temporal preconditions satisfying the semantics of the original interleaving model: after the synthesis only temporal conditions satisfying the acceptance condition will be chosen.

\begin{algorithm}[t]
\DontPrintSemicolon
\KwData{$\sigma_c [\wedge_{m = 1 \ldots n_A} [pc_{c,m_{low}}, pc_{c,m_{up}})]$, $\sigma_d[\wedge_{m = 1 \ldots n_A} [pc_{d,m_{low}}, pc_{d,m_{up}})]$, which are consecutive actions in $\overline{\sigma_i}$ of $A_i$ of $S_{IM} = (A, N)$,
        and one newly added action pattern $\sigma_{\frac{a}{b}}$ to be inserted between}
\KwResult{Temporal preconditions for action pattern $\sigma_{\frac{a}{b}}$: $[\wedge_{m = 1 \ldots n_A} [pc_{\frac{a}{b},m_{low}}, pc_{\frac{a}{b},m_{up}})]$}
\Begin{
    \For{$m = 1, \ldots, n_A$} {
        \eIf{ $m \neq i$} {
         $pc_{\frac{a}{b},m_{low}} := pc_{c,m_{low}}$ /* Use the lower bound of $c$ for its lower bound */\;
            $pc_{\frac{a}{b},m_{up}} := pc_{d,m_{up}}$ /* Use the upper bound of $d$ for its upper bound */\;
        }{
         $pc_{\frac{a}{b},m_{low}} := \frac{a}{b}$; $pc_{\frac{a}{b},m_{up}} := d$ 
        }
    }
}
\caption{DecideInsertedFTTemplateTiming\label{algo.decide.FT.timing}}
\end{algorithm}
We conclude this step with two remarks:
\begin{list1}
    \item For all existing actions, the non-deterministic choice generation in step A.2.1 must be modified to contain
            these rational points introduced by FT mechanisms.
    \item A problem induced by FT synthesis is whether the system behavior changes due to the introduction of FT mechanisms.
            We answer the problem by splitting into two subproblems:
            \begin{list1}
                \item \textbf{[Problem 1]} Whether the system is still schedulable due to the introduction of FT actions,
                         as these FT actions also consume time.
                        This can only be answered when the result of synthesis is generated, and we leave this to section~\ref{sec.conversion.from.strategy.to.model}.
                \item \textbf{[Problem 2]} Whether the networking behavior remains the same. This problem \emph{must}
                        be handled before game creation, as introducing a FT message may significantly influence the worst
                        case message transmission time (WCMTT) of all existing messages, leading a
                        completely different networking behavior. The answer of this problem depends on many factors,
                        including the hardware in use, the configuration setting, and the analysis technique used for
                        the estimation of WCMTT.
                        In Appendix A we give a simple analysis for ideal CAN buses~\cite{davis:2007:controller}, which are
                        used most extensively in industrial and automotive embedded systems: in the analysis, we
                        propose conditions where newly added messages do \emph{\textbf{not}} change the existing networking behavior.
                        Similar analysis can be done with other timing-predictable networks, e.g., FlexRay~\cite{pop:2008:timing}.
            \end{list1}
\end{list1}

\subsubsection{Step A.2.3: Game Creation by Introducing Faults\label{sub.sec.from.IM.dg}}


In our implementation, we do not generate the primitive form of distributed games (DG), as the definition of DG is too primitive
to manipulate. Instead, algorithms in our implementations are based on our created variant called \emph{\textbf{symbolic distributed games} (SDG)}:

\begin{defi}
Define a symbolic distributed game \begin{small}$\mathcal{G}_{ABS} = (V_{f}\uplus  V_{CTR}\uplus V_{ENV}, A, N, \sigma_{f}, pred)$\end{small}.
\begin{list1}
    \item $V_{f}$, $V_{CTR}$, $V_{ENV}$ are disjoint sets of (fault, control, environment) variables.
    \item $pred: V_{f}\times V_{CTR}\times V_{ENV} \rightarrow \{\verb"true",\verb"false"\}$ is the
            partition condition.
      \item $A = \bigcup_{i = 1\ldots n_A} A_i$ is the set of \textbf{symbolic local games (processes)} ,
            where in $A_{i} = (V_{i} \cup V_{env_i} , \overline{\sigma_i})$,
    \begin{itemize}
         \item $V_i$ is the set of variables, and $V_{env_i} \subseteq V_{ENV}$.
        \item  $\overline{\sigma_i} := \bigcup\sigma_{i_1} \langle \wedge_{m = 1,\ldots, n_A} pc_{i_{1_{m}}}\rangle; \ldots;
            \bigcup \sigma_{i_k}\langle \wedge_{m = 1,\ldots, n_A} pc_{i_{k_{m}}}\rangle$ is a sequence, where \\$\forall j = 1, \ldots, k$, $\bigcup\sigma_{i_j} \langle \wedge_{m = 1,\ldots, n_A} pc_{i_{j_{m}}}\rangle$ is a set of choice actions for player-0 in $A_i$.
            \begin{itemize}
                \item $\sigma_{i_j}$ is defined similarly as in IM.
                \item \begin{small}$\forall m = \{1, \ldots, n_A\}$, $ pc_{i_{j_{m}}} \in [pc_{i_j,m_{low}}, pc_{i_j,m_{up}}), pc_{i_j,m_{low}}, pc_{i_j,m_{up}} \in iSet(\overline{\sigma_m})$\end{small}.
            \end{itemize}
        \item $V_{CTR} = \bigcup_{i= 1\ldots n_A} V_i$.
    \end{itemize}
   \item $N = \bigcup_{i = 1\ldots n_N} N_i$, $N_i = (T_i, size_i, tran_i)$ is the set of network processes.
    \begin{itemize}
        \item $T_i$ and $size_i$ are defined similarly as in IM.
         \item \begin{small}$tran_i:  V_{f}  \times (\{\verb"true", \verb"false"\} \times \{1, \ldots, n_A\}^{2}\times \bigcup_{i = 1,\ldots, n_A} (V_{i} \cup V_{env_i}) \times \mathbf{Z} \times \{1, \ldots ,size_i\}) \rightarrow V_{f}  \times (\{\verb"true", \verb"false"\} \times \{1, \ldots, n_A\}^{2}\times \bigcup_{i = 1,\ldots, n_A} (V_{i} \cup V_{env_i}) \times \mathbf{Z} \times \{1, \ldots ,size_i\})$\end{small} is the network transition relation for processing messages (see sec.~\ref{sub.sec.PISEM} for meaning), but can be influenced by additional variables in $V_{f}$.
    \end{itemize}
       \item \begin{small}$\sigma_{f}: V_{f}\times V_{CTR}\times V_{ENV}\times \bigwedge_{i = 1 \ldots n_A} iSet(\overline{\sigma_i}) \rightarrow V_{ENV} \times V_{f} \times \bigwedge_{i = 1 \ldots n_A} iSet(\overline{\sigma_i})$\end{small} is the environment update relation.
\end{list1}
\end{defi}

\begin{figure}[t]
\centering
\begin{tabular}[t]{|l|l|l|}
\hline
 & DG & SDG \\
\hline
State space & product of all vertices  & product of all variables  \\
             &   in local games         & (including variables used in local games) \\
\hline
Vertex partition ($V_0$ and $V_1$) & explicit partition  & use $pred$ to perform partition \\
\hline
Player-0 transitions & defined in local games  & defined in $\overline{\sigma_i}$ of $A_i$, for all $i \in \{1,\ldots,n_A\}$ \\
\hline
Player-1 transitions & explicitly specified  & defined in $N$ and $\sigma_{f}$ \\
                     & in the global game    &  \\

\hline
\end{tabular}
\caption{Comparison between DG and SDG}
 \label{table:comparison}
\end{figure}

We establish an analogy between SDG and DG using Figure~\ref{table:comparison}.
\begin{enumerate}
    \item The configuration $v$ of a SDG is defined as the product of all variables used.
    \item A play for a SDG starting from state $v_0$ is a maximal path $\pi=v_0 v_1 \ldots$, where
        \begin{list1}
            \item In $v_k$, player-1 determines the move $(v_k,v_{k+1})\in E$ when $pred(v_k)$ is evaluated to \verb"true" (\verb"false" for player-0); the partition of vertices $V_0$ and $V_1$ in a SDG is implicitly defined based on this, rather than
                specified explicitly as in a distributed game.
            \item A move $(v_k,v_{k+1})$ is a selection of executable transitions defined in $N$, $\sigma_{f}$, or $A$; in our formulation, transitions in $N$ and $\sigma_{f}$ are all environment moves\footnote{As the definition of distributed games features multiple processes having no interactions among themselves but only with the environment, a SDG is also a distributed game. In the following section, our proof of results and algorithms are all based on DG.}, while transitions in $A$ are control moves\footnote{
                    This constraint can be released such that transitions in $A$ can either be control (normal) or environment (induced by faults) moves; here
                    we leave the formulation as future work.}.
    \end{list1}
    \item Lastly, a distributed positional strategy for player-0 in a SDG can be defined analogously as to uniquely select an action from the set $\bigcup \sigma_{\alpha_j} \langle \wedge_{m = 1,\ldots, n_A},  pc_{\alpha_{j_{m}}}\rangle$, for all $A_i$ and for all program counter $j$ defined in $\overline{\sigma_i}$. Each strategy should be insensitive of contents in other symbolic local games.
\end{enumerate}





We now summarize the logical flow of game creation using Figure~\ref{fig:SDG.creation}.
\begin{list1}
    \item (a) Based on the fixed number of slots (for FT mechanisms) specified by the user, extend $IM$ to $IM_{frac}$ to contain fractional PC-values induced by the slot.
    \item (b) Create $IM_{frac+FT}$, including the sequence of choice actions (as specified in the SDG) by
            \begin{list1}
                \item Extracting action sequences defined in $IM_{frac}$ to choices (step A.2.1).
                \item Inserting FT choices (step A.2.2).
            \end{list1}
    \item (c) Introduce faults and partition player-0 and player-1 vertices:
            In engineering, a \emph{fault model} specifies potential undesired behavior of a piece of equipment, such that engineers can predict the consequences of system behavior. 
             Thus, a \emph{fault} can be formulated with three tuples\footnote{For complete formulation of
             fault models, we refer readers to our earlier work~\cite{Cheng:2009:Timing}.}:
              \begin{enumerate}
                  \item The fault type (an unique identifier, e.g., \verb"MsgLoss", \verb"SensorError").
                  \item The maximum number of occurrences in each period.
                  \item Additional transitions not included in the original specification of the system (\emph{fault effects}).
             \end{enumerate}
            We perform the translation into a game using the following steps.
                \begin{list1}
                    \item For (1), introduce variables to control the triggering of faults.
                    \item For (2), introduce counters to constrain the maximum number of fault occurrences in each period.
                    \item For (3), for each transition used in the component influenced by the fault, create a corresponding
                                 fault transition which is triggered by the variable and the counter; similarly create a transition
                                 with normal behavior (also triggered by the variable and the counter). Notice that our
                                 framework is able to model faults actuating on the FT mechanisms, for instance, the behavior
                                 of network loss on the newly introduced FT messages.
                \end{list1}
\end{list1}

\begin{figure}[t]
\centering
 \includegraphics[width=0.7\columnwidth]{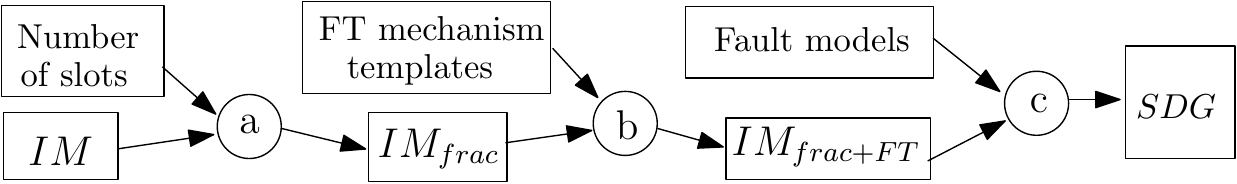}
  \caption{Creating the SDG from IM, FT mechanisms, and faults.}
 \label{fig:SDG.creation}
\end{figure}

\noindent \textbf{[Example]}
We outline how a game (focusing on fault modeling) is created with the example in sec.~\ref{sec.motivating.scenarios}; similar approaches can be applied for input errors or message corruption; here the modeling of input (for \verb"InputRead(m)") is skipped.
\begin{list1}
    \item Create the predicate $pred$: $pred$ is evaluated to \verb"false" in all cases except (a) when the boolean variable $occu$ (representing the
            network occupance) is evaluated to $\verb"true"$ and (b) when for all $i \in \{1,\ldots, n_A\}$, $\Delta_{next_i} = |\overline{\sigma_i}| + 1$ (end of period); the predicate partitions player-0 and player-1 vertices.
    \item For all process $i$ and  program counter $j$, the set of choice actions $\bigcup \sigma_{\alpha_j} \langle \wedge_{m = 1,\ldots, n_A},  pc_{\alpha_{j_{m}}}\rangle$ are generated based
            on the approach described previously.
    \item Create variable $v_f \in V_{f}$, which is used to indicate whether the fault (\texttt{MsgLoss}) has been activated in this period.
    \item In this example, as the maximum number of fault occurrences in each period is $1$, we do not need to create additional counters.
    \item For each message sending transition $t$ in the network, create two normal transitions $(v_f = \verb"true" \wedge v'_f = \verb"true")  \wedge t$ and
            $(v_f = \verb"false" \wedge v'_f = \verb"false")  \wedge t$ in the game.
    \item For each message sending transition $t$ in the network, generate a transition $t'$ where the message is sent, but the value is not
            updated in the destination. Create a fault transition $(v_f = \verb"false" \wedge v'_f = \verb"true") \wedge t'$ in the game. 
    \item Define $\sigma_{f}$ to control $v_f$: if for all $i \in \{1,\ldots, n_A\}$, $\Delta_{next_i} = |\overline{\sigma_i}| + 1$, then
            update $v_f$ to \verb"false" as $\Delta_{next_i}$ updates to $1$ (reset the fault counter at the end of the period).
\end{list1}






\section{Step B: Solving Distributed Games\label{sec.game.solving}}


We summarize the result from~\cite{mohalik:2003:distributed} as a general property of distributed games.
\begin{theo}
There exists distributed games with global winning strategy but (a) without distributed memoryless strategies, or (b) all distributed
strategies require memory. In general, for a finite distributed game, it is undecidable to check whether a distributed
strategy exists from a given position~\cite{mohalik:2003:distributed}.
\end{theo}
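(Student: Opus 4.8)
The statement collects three facts, the last two of which are exactly Theorem results of Mohalik and Walukiewicz~\cite{mohalik:2003:distributed}; so in the paper it suffices to cite that work, but let me sketch how I would reconstruct each part. The common ingredient in all three arguments is a \emph{blindness lemma}: because the product edge relation $\mathcal{E}$ never lets one component's move depend on another component's state, any two plays $\pi,\pi'$ with $proj(\pi,i)=proj(\pi',i)$ are indistinguishable to the local strategy $f_i$, hence $f_i$ prescribes the same move in both. I would prove this first, directly from the definition of $\mathcal{E}$, and then reuse it everywhere.

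For the memory-necessity claims (a) and (b) the plan is to exhibit small distributed games over two local arenas $G_1,G_2$ in which player~$1$ first makes a choice whose effect is observable in $G_1$ but not in $G_2$, and whose winning condition forces both components to react consistently to that choice. For (a) I would arrange $G_2$ so that the needed bit can be reconstructed from $G_2$'s own earlier moves: then a distributed strategy \emph{with memory} wins, while any positional local strategy, depending only on the current local vertex, collapses the two environment choices and loses against one of them. For (b) I would make the hidden bit genuinely unobservable in $G_2$: the global (centralized) strategy wins because it may inspect $proj(v,1)$ when choosing the move of component~$2$, but for any distributed $\xi=\langle f_1,f_2\rangle$ the blindness lemma gives two plays that agree on the $G_2$-projection and differ on the $G_1$-projection, forcing $f_2$ to play the same move in both and lose one. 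These examples are small and the arguments short once the blindness lemma is available.

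For undecidability the plan is a reduction from a known undecidable problem, e.g. the halting problem for two-counter machines (equivalently, one could use the Post Correspondence Problem). I would encode a computation so that, along a play, player~$0$'s two components are jointly forced to emit a purported halting run of the machine — one component producing it and the other a shifted copy, so that local consistency of each plus agreement of the two amounts to global validity — while player~$1$ may at any point stop the play and challenge a claimed local inconsistency. Then a distributed winning strategy exists iff a genuine halting run exists. Soundness (machine halts $\Rightarrow$ distributed strategy wins) is routine: each component just replays the fixed run and survives every challenge. Completeness (distributed strategy wins $\Rightarrow$ machine halts) is where the blindness lemma does the work: from a hypothetical distributed winning strategy one must extract an actual halting computation, using that neither component can observe the other and hence cannot fake global consistency of an unbounded object unless the object is real.

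I expect the completeness direction of the undecidability reduction to be the main obstacle: one has to pin down precisely what each $f_i$ can and cannot depend on, and verify that the environment's challenge moves are strong enough to refute every non-solution yet never refute a genuine one — in particular that a spurious distributed strategy cannot win by exploiting some unforeseen timing of the environment's challenge. The delicate part throughout is the bookkeeping that $\mathcal{E}$ really enforces the intended partial observation; once that is nailed down, the separating examples (a) and (b) and the soundness direction are comparatively mechanical.
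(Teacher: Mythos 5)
The paper gives no proof of this theorem: it is stated explicitly as a summary of results imported from Mohalik and Walukiewicz~\cite{mohalik:2003:distributed}, so your observation that a citation suffices is exactly what the paper does. Your reconstruction sketch (a blindness lemma derived from the product edge relation, small two-component separating examples for the memory claims, and an undecidability reduction in which two mutually blind components must jointly emit consistent shifted copies of a halting computation) is consistent with how these results are established in the cited work, though as you acknowledge it remains an outline rather than a proof.
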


As the problem is undecidable in general, we restrict our interest in finding a distributed
positional strategy for player $0$, if there exists one.
We also focus on games with reachability winning conditions. By posing the restriction, the problem is NP-Complete.

\begin{theo} \textbf{[$PositionalDG_0$]}
Given a distributed game $\mathcal{G} = (\mathcal{V}_0 \uplus \mathcal{V}_1, \mathcal{E})$, an initial state $x=(x_1, \ldots, x_n)$ and
a target state $t=(t_1, \ldots, t_n)$, deciding whether there exists a positional (memoryless) distributed strategy for player-$0$  from $x$ to $t$ is NP-Complete.
\end{theo}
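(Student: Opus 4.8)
The plan is to prove membership in NP and NP-hardness separately. For membership in NP, the obvious certificate is a positional distributed strategy itself, i.e.\ the tuple $\xi = \langle f_1, \ldots, f_n \rangle$ where each $f_i : V_{0_i} \to V_{0_i} \uplus V_{1_i}$. Its size is polynomial in the size of $\mathcal{G}$ (in fact in the sizes of the local game graphs $G_i$). Given such a certificate, I must check in polynomial time that $\xi$ is winning for the reachability objective from $x$ to $t$. The point is that once the positional strategy $\xi$ is fixed, the only remaining nondeterminism in a play comes from player $1$; so the set of vertices reachable from $x$ under $\xi$ (with player $0$ forced to follow $f_i$ on each component, and player $1$ free) is computable by a straightforward graph reachability / fixpoint computation over $\mathcal{V}_0 \uplus \mathcal{V}_1$. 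I would verify: (a) that $\xi$ is well-defined on every player-$0$ vertex actually reachable (each $f_i(x_i)$ is an $E_i$-move), and (b) that every play consistent with $\xi$ either visits $t$ or — because reachability is the winning condition and plays are maximal — that no $\xi$-consistent play avoids $t$ forever. Condition (b) is the standard "player-$0$ attractor under a fixed strategy" check: compute the set $W$ of vertices from which, following $\xi$, every continuation meets $V_{goal} = \{t\}$; accept iff $x \in W$. All of this is polynomial in $|\mathcal{G}|$, hence the problem is in NP.

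For NP-hardness I would reduce from a suitable NP-complete problem — the cleanest choice is probably 3SAT, or alternatively a constraint-satisfaction / tiling-style problem that mirrors the "choose one local move per component, consistent across components" structure of positional distributed strategies. The intuition driving the reduction is exactly the source of hardness noted in the paper: each local strategy $f_i$ picks, for each local player-$0$ vertex, one outgoing move, and these choices must jointly satisfy a global (reachability) constraint, while no single $f_i$ can see the others' choices. Concretely, from a 3CNF formula $\varphi$ with variables $z_1, \ldots, z_k$ and clauses $C_1, \ldots, C_m$, I would build $k$ local games (one per variable, each with essentially a single player-$0$ vertex offering two moves, encoding the truth value of $z_j$) — or bundle variables so the number of local games is constant and the branching lives inside each $G_i$ — plus a player-$1$ "auditor" component that, from the product player-$1$ vertex, may branch into any clause $C_\ell$ and then inspect the three relevant components; the target $t$ is reachable from a clause-$\ell$ vertex iff at least one of its literals is satisfied by the corresponding local choices. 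Then player $0$ has a positional distributed strategy from $x$ to $t$ iff there is a truth assignment satisfying every clause, i.e.\ iff $\varphi$ is satisfiable. One must check that a positional strategy here really corresponds to a single assignment (no history, no cross-component information) and that the game graph is bipartite and meets the structural constraints on $\mathcal{E}$ in the definition of distributed games.

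The main obstacle I expect is the hardness direction, and specifically making the reduction honor the syntactic restrictions imposed on distributed games in the paper: the local game graphs $G_i$ must be bipartite, the global edge relation $\mathcal{E}$ must have the prescribed form (from a $\mathcal{V}_0$ vertex each $V_{0_i}$-component moves by $E_i$ and each $V_{1_i}$-component is frozen; from a $\mathcal{V}_1$ vertex each component either stays or jumps into $V_{0_i}$, with at least one component moving), and the initial and target states are single product vertices rather than sets. Encoding "player $1$ picks a clause and then reads off three local truth values, reaching $t$ iff the clause is satisfied" within this rigid move discipline — without accidentally giving player $0$ extra power via the frozen-then-moved components, and without needing player $0$ to remember which clause was chosen (which a positional strategy cannot) — is the delicate part. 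I would handle it by routing all clause-checking through player-$1$ moves and dead-end (losing) sink vertices, so that player $0$'s only freedom is the initial per-variable choice, and by padding components with trivial self-structure so the global-edge constraints are satisfied vacuously on the non-active coordinates. Verifying that the resulting instance is polynomial-size and that both directions of the "satisfiable $\iff$ positional strategy exists" equivalence go through completes the proof.
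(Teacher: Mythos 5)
Your NP-membership argument coincides with the paper's: the certificate is the tuple of local positional strategies, and verification is a polynomial reachability/attractor computation on the global graph restricted to the selected control edges. The gap is in the hardness direction, and it is concrete: your primary construction uses one local game per variable of the 3CNF formula. But in the definition of a distributed game the global vertex sets are the \emph{products} of the local ones ($\mathcal{V}_1 = V_{1_1}\times\cdots\times V_{1_n}$, etc.), and the problem instance $\mathcal{G}=(\mathcal{V}_0\uplus\mathcal{V}_1,\mathcal{E})$ is this explicit product graph. With $k$ local games of constant size the product has on the order of $3^{k}$ vertices, so the reduction is not polynomial. The paper is careful about exactly this point: it uses a \emph{fixed} number of local games (three variable games plus one auditor), each of size $O(n)$ or $O(m+n)$, so the product is $O(n^{3}(m+n))$.

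Your hedge (``bundle variables so the number of local games is constant'') is the right fix but is under-specified, and it forces a second idea you do not mention. A single local game can occupy only one vertex at a time, so to let player~1 simultaneously interrogate the three literals of a clause you need three separate copies of the variable gadget, one per literal position --- and then nothing forces the three copies' positional strategies to encode the \emph{same} truth assignment. The paper resolves this with additional player-1 ``consistency check'' moves from $(S,S,S,S)$ to $(var_i,var_i,var_i,v_{m+i_0})$, which lead to the target only if all three copies answer identically on $var_i$; without such a gadget the equivalence ``positional strategy exists $\iff$ satisfiable'' fails in the right-to-left reading of the game (a strategy could win every clause check using mutually inconsistent assignments across copies). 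So the proposal as written does not go through; it needs the constant-copy decomposition together with the consistency gadget to become the paper's proof.
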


\begin{proof}

We first start by recalling the definition of attractor, a term which is commonly used in the game and later applied in the proof.
Given a game graph $G=(V_0 \uplus V_1, E)$, for $i\in\{0,1\}$ and $X\subseteq V$, the map $\attr_i(X)$ is defined by
\[
 \attr_i(X) := X \cup \{ v\in V_i \mid vE \cap X \neq \emptyset \} \cup \{ v\in V_{1-i} \mid \emptyset \neq vE \subseteq X \},
\]
i.e., $\attr_i(X)$ extends $X$ by all those nodes from which either player $i$ can move to $X$ within one step or player $1-i$
cannot prevent to move within the next step. ($vE$ denotes the set of successors of $v$.)
Then $\Attr_i(X) := \bigcup_{k\in\N} \attr^k_i(X)$ contains all nodes from which player $i$ can force any play to visit the set $X$.

\begin{figure}
\centering
 \includegraphics[width=0.8\columnwidth]{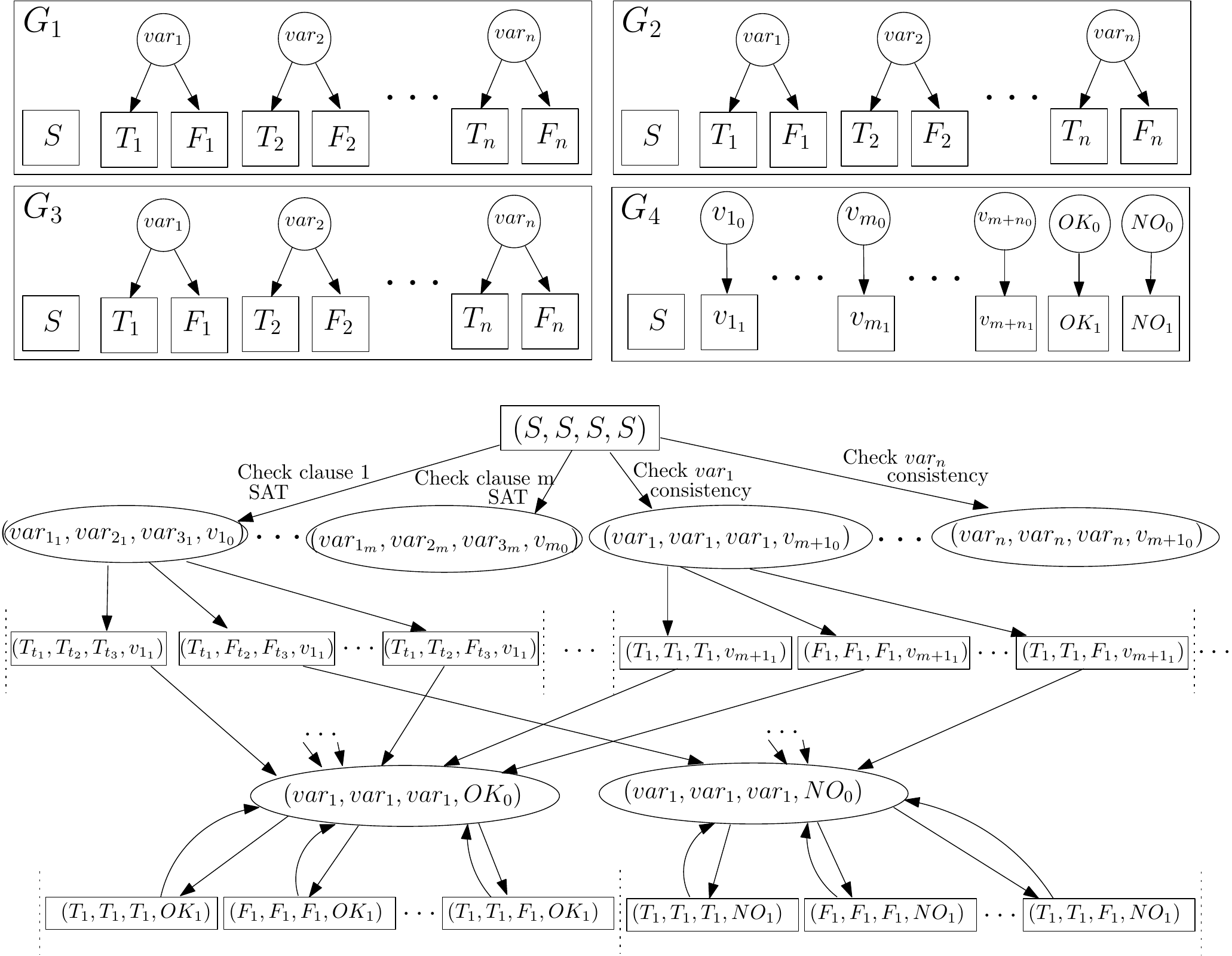}
  \caption{Illustrations for the reduction from 3SAT to $PositionalDG_0$.}
 \label{fig:Proof}
\end{figure}

We continue our argument as follows.

\textbf{[NP] }The reachability problem for a distributed game can be solved in NP: a solution instance
            $\xi = \langle f_1, \ldots, f_n\rangle$ is a strategy which
        selects exactly one edge for every control vertex in the local game. As the distributed game graph is known, after the selection we calculate the reachability attractor $\Attr_0(\{t\})$ of the distributed game: during the calculation we overlook
        transitions which is not
        selected (in the strategy) in the local game. This means that in the distributed game, to add a
        control vertex $v\in \mathcal{V}_0 $ to the attractor using the edge $(v,u)$, we must
        ensure that $\forall j \in \{1,\ldots, n\}.\; ( proj(v_{i}, j) \in V_{0_j} \rightarrow proj(u, j) = proj(f_j(v), j))$.
        Lastly, we check if the initial state is contained; the whole calculation and checking process can be done in deterministic P-time.

\textbf{[NP-C]} For completeness proof, we perform a reduction from 3SAT to the finding of positional strategies in a distributed game.
          Given a set of 3CNF clauses $\{C_1, \ldots, C_m\}$ under the set of literals $\{var_1, \overline{var_1}, \ldots, var_n, \overline{var_n}\}$ and
          variables $\{var_1, \ldots, var_n\}$,
          the distributed game $\mathcal{G}$ is created as follows (see Figure~\ref{fig:Proof} for illustration):
    \begin{list1}
        \item Create 3 local games $G_1$, $G_2$, and $G_3$, where for $G_i=(V_{0_i} \uplus V_{1_i},E_i)$:
            \begin{list1}
                \item $V_{0_i} = \{var_1, \ldots, var_n \}$, $V_{1_i} = \{S, T_{var_1}, F_{var_1}, \ldots, T_{var_n}, F_{var_n}\}$.
                \item $E_i = \bigcup_{j = 1,\ldots, n} \{(var_j, T_{var_j}), (var_j, F_{var_j})\}$.
          \end{list1}
        \item Create local game $G_4=(V_{0_4} \uplus V_{1_4},E_4)$:
         \begin{list1}
                \item $V_{0_4} = \{OK_0, NO_0\}\;\cup\;\bigcup_{j = 1, \ldots, m+n}\{v_{j_0}\}$.
                \item $V_{1_4} = \{S, OK_1, NO_1\}\;\cup\;\bigcup_{j = 1, \ldots, m+n}\{v_{j_1}\}$.
                \item $E_4 = \bigcup_{j = 1, \ldots, m+n}\{(v_{j_0}, v_{j_1})\}\;\cup\;\{(OK_0, OK_1), (NO_0, NO_1)\}$.
          \end{list1}
        \item Second, create the distributed game $\mathcal{G}$ from local
                games above, and define the set of environment transition to include the following types using the 3SAT problem:
            \begin{enumerate}
                \item (Intention to check SAT) In the 3SAT problem, for clause $C_i = (l_{1_i}\vee l_{2_i} \vee l_{3_i})$, let the variable
                        for literals $l_{1_i},l_{2_i}, l_{3_i}$ be $var_{1_i},var_{2_i}, var_{3_i}$.
                        Create a transition in the distributed game from $(S,S,S,S)$ to $(var_{1_i},var_{2_i}, var_{3_i}, v_{i_0})$.
                 \item (Intention to check consistency) In the 3SAT problem, for variable $var_i$,
                        Create a transition in the distributed game from $(S,S,S,S)$ to $(var_i, var_i, var_i, v_{m+i_0})$.
                 \item (Result of clause) In the 3SAT problem, for clause $C_i = (l_{1_i}\vee l_{2_i} \vee l_{3_i})$, let the variable for the clause
                        be $var_{1_i},var_{2_i}, var_{3_i}$. We refer the vertex evaluating $var_{j_i}$ as \verb"true" to $T_i$ in the local game $G_j$; similarly, we use $F_i$ for a variable being evaluated \verb"false".
                        For each clause $C_i$, enumerate over 8 cases for the assignments of $var_{1_i},var_{2_i}, var_{3_i}$ which make $C_i$ true.
                        \begin{enumerate}
                            \item For cases which makes the assignment true, create an edge from the assignment to $(var_1, var_1, var_1, OK_0)$;
                            for example, if \begin{small}$var_{1_i} = \verb"true", var_{2_i} = \verb"false", var_{3_i} = \verb"true"$\end{small} makes a satisfying assignment to $C_i$, create
                                an edge $((T_{1_i}, F_{2_i}, T_{3_i}, v_{i_1}), (var_1, var_1, var_1, OK_0))$.
                            \item For cases which makes the assignment false, create an edge from the assignment to $(var_1, var_1, var_1, NO_0)$.
                        \end{enumerate}
                 \item (Result of variable consistency) For all $i \in \{1, \ldots, n\}$:
                    \begin{enumerate}
                        \item Create two edges $((T_i, T_i, T_i, v_{m+i_1}), (var_1, var_1, var_1, OK_0))$ and \\
                                $((F_i, F_i, F_i, v_{m+i_1}), (var_1, var_1, var_1, OK_0))$.
                        \item For other 6 combinations $(T_i, F_i, F_i, v_{m+i_1}), (F_i, F_i, T_i, v_{m+i_1}),\\ (T_i, F_i, F_i, v_{m+i_1}), (T_i, T_i, F_i, v_{m+i_1}), (F_i, T_i, T_i, v_{m+i_1}),\\ (T_i, F_i, T_i, v_{m+i_1})$, create edges to $(var_1, var_1, var_1, NO_0)$.
                    \end{enumerate}
                 \item (Continuous execution) For all $i \in \{1, \ldots, n\}$:
                       \begin{enumerate}
                        \item For all combinations $(T_i, F_i, F_i, OK_1), (F_i, F_i, T_i, OK_1), (T_i, F_i, F_i, OK_1), \\(F_i, F_i, T_i, OK_1), (T_i, F_i, F_i, OK_1), (T_i, T_i, F_i, OK_1), (F_i, T_i, T_i, OK_1),\\ (T_i, F_i, T_i, OK_1)$, create edges to $(var_1, var_1, var_1, OK_0)$.
                         \item For all combinations $(T_i, F_i, F_i, NO_1), (F_i, F_i, T_i, NO_1), (T_i, F_i, F_i, NO_1), \\(F_i, F_i, T_i, NO_1), (T_i, F_i, F_i, NO_1), (T_i, T_i, F_i, NO_1), (F_i, T_i, T_i, NO_1),\\ (T_i, F_i, T_i, NO_1)$, create edges to $(var_1, var_1, var_1, NO_0)$.
                    \end{enumerate}
            \end{enumerate}
    \end{list1}

We claim that $\{C_1, \ldots, C_m\}$ is satisfiable iff $\mathcal{G}$ has a positional distributed strategy to reach $(var_1, var_1, var_1, OK_0)$
from $(S, S, S, S)$.
\begin{enumerate}
\item If $\{C_1, \ldots, C_m\}$ is satisfiable, let the set of satisfying literals be $L'$, and assume that for all literals, in each pair ($var_i, \overline{var_i}$) exactly one of them is in $L'$ (this is always possible).
    For the distributed game $\mathcal{G}$, in local games $G_1$, $G_2$ and
    $G_3$, let the positional strategy for control vertex $var_i$ move to $T_i$ if $var_i\in L'$, and move to $F_i$ if $\overline{var_i}\in L'$ (for $G_4$, simply use the local edge).
    In a play, as player-1 starts the move, any of his selection leads to a player-0 vertex:
    \begin{list1}
        \item If player-1 choose edges of type 1 (intension to check the clause of SAT), for $G_1$, $G_2$ and $G_3$, the vertex uses its
                positional strategy, which corresponds to the assignment in the clause. The combined move then forces player-1 to choose an edge of type 3(a),
                 leading to the target state.
        \item If player-1 choose edges of type 2 (intension to check the consistency), as the positional strategies for $G_1$, $G_2$ and $G_3$ are all derived from the same satisfying instance of the 3SAT problem, for each strategy, it performs the same move from $var_i$ to $T_i$ or to $F_i$; the combined move of player-0 forces player-1 to choose an edge of type 4(a),
            leading to the target state.
    \end{list1}
\item Consider a distributed positional strategy $\langle f_1, f_2, f_3, f_4\rangle$ which reaches\\ $(var_1, var_1, var_1, OK_0)$
        from $(S, S, S, S)$. In $G_1$, for each control vertex $var_i$, it points to $T_i$ or $F_i$. The positional strategy of $G_1$ generates a satisfying instance of the 3SAT problem:
        \begin{list1}
            \item Assign $var_i$ in the 3SAT problem to \verb"true" if the strategy points vertex $var_i$ in $G_1$ to $T_i$.
            \item Assign $var_i$ in the 3SAT problem to \verb"false" if the strategy points vertex $var_i$ in $G_1$ to $F_i$.
        \end{list1}
\end{enumerate}

We analyze the size of the game and the time required to perform the reduction.
\begin{list1}
    \item For $i = 1, 2, 3$, $G_i$ contains $3n+1$ vertices, and $G_4$ has $2(m+n+2)+1$ vertices. As the total vertices of the distributed game is the product,
        it is polynomial to the original 3SAT problem instance.
    \item Consider the time required to perform reduction from 3SAT to $PositionalDG_0$:
    \begin{list1}
        \item For $i = 1,2,3$, $G_i$, they are constructed in $\mathcal{O}(n)$.
        \item $G_4$ is constructed in $\mathcal{O}(m+n)$.
        \item For the distributed game, vertices are constructed polynomial to $m$ and $n$, more precisely $\mathcal{O}(n^{3}(m+n))$.
        \item For edges in the distributed game, we consider the most complicated case, i.e. creating an edge of type 3. Yet it takes constant time to check
            and establish the connection, and for each player-1 vertex except $(S,S,S,S)$ which has $m+n$ edges, at most 8 edges are created. Therefore, the total required time for edge construction is also polynomial to $m$ and $n$.
    \end{list1}
\end{list1}
Therefore, 3SAT $\leq_{poly} PositioalDG_0$, which concludes the proof.
\end{proof}


With the NP-completeness proof, finding a distributed reachability strategy for distributed games amounts to the process of searching.
For example, it is possible to perform a bounded-depth forward search over choices of local transitions:
during the search, the selection of edges is constructed as a tree node in the search tree, and the set of reachable vertices (represented as BDD) based on the selection is also stored in the tree node. This method is currently implemented in our framework.


\subsection{Solving Distributed Games using SAT Methods}
Apart from the search method above, in this section we give an alternative approach based on a reduction to SAT.  Madhusudan, Nam, and Alur~\cite{alur:2005:symbolic} designed the \emph{bounded witness algorithm} (based on unrolling) for solving reachability (local) games. Although based on their experiment, the witness algorihm is not as efficient as the BDD based approach in centralized games,
we find this concept potentially useful for solving distributed games. For this, we have created a variation (Algorithm~\ref{algo.SAT.DG0})
for this purpose. 

To provide an intuition, first we paraphrase the concept of witness defined in~\cite{alur:2005:symbolic}, a set of
states which witnesses the fact that player 0 wins.
In~\cite{alur:2005:symbolic}, consider the generated SAT problem from a local game $G=(V_0 \uplus V_1,E)$ trying to reach from $V_{init}$ to $V_{goal}$: for $i = 1, \ldots, d$ and vertex $v \in V_0 \uplus V_1$, variable $\langle v\rangle_i =\texttt{true}$ when one of the following holds:
\begin{enumerate}
    \item $v \in V_{init}$ and $i = 1$ (if $v \not\in V_{init} \wedge i = 1$ then $\langle v\rangle_i =\texttt{false}$).
    \item $v \in V_{goal}$ (if $v \not\in V_{goal} \wedge i = d$ then $\langle v\rangle_i =\texttt{false}$).
    \item $v \in V_0 \setminus V_{goal}$ and $\exists v' \in V_0 \uplus V_1.\; \exists e \in E. \; \exists j > i.\; ( e=(v, v') \wedge \langle v'\rangle_j = \texttt{true})$
    \item $v \in V_1 \setminus V_{goal}$ and $\forall e = (v, v') \in E. \; \exists j > i.\;  \langle v'\rangle_j = \texttt{true}$
\end{enumerate}


This recursive definition implies that if $v$ in $V_0$ (resp. in $V_1$) is not the goal but in the witness set, then exists one
(resp. for all) successor $v'$ which should either be (i) in a goal state or (ii) also in the witness: note that for (ii), the
number of allowable steps to reach the goal is decreased by one.
This definition ensures that all plays defined in the witness reaches the goal from the initial state within $d-1$ steps:
 If a play (starting from initial state) has proceeded $d-1$ steps and reached $u\not\in V_{goal}$, then based on (2),
 $\langle u\rangle_d$ should be \texttt{false}. However, based on (1), (3), (4) the $\langle u\rangle_d$ should be set to \texttt{true} (reachable from initial states using $d-1$ steps). Thus the SAT problem should be unsatisfiable.

In general, Algorithm~\ref{algo.SAT.DG0} creates constraints based on the above concept, but compared to the bounded
local game reachability algorithm in~\cite{alur:2005:symbolic}, it contains slight modifications:
\begin{enumerate}
    \item When a variable $\langle v\rangle_i$ is evaluated to \verb"true", it means that vertex $v$ can reach the target state within $d-i$ steps, which
            is the same as what is defined in~\cite{alur:2005:symbolic}. However, we introduce more variables for edges in local
            games, which is shown in STEP 1: when a variable $\langle e\rangle$ is evaluated to \verb"true", the distributed strategy uses the local transition $e$.
    \item To achieve locality, we must include constraints specified in STEP $4$: the positional (memoryless) strategy disallows to change the use of local edges from a given vertex.
    \item We modify the impact of control edge selection in STEP $6$ by adding an additional implication "$\langle e\rangle \Rightarrow$"
        over the original constraint in the witness algorithm~\cite{alur:2005:symbolic}. Here as in Mohalik and Walukiwitz's formulation,
        all subgames in a control position should proceed a move (the progress of a global move is a combination of local moves), we need to create constraints considering all possible local edge combinations.
\end{enumerate}

\begin{algorithm}
\DontPrintSemicolon
\KwData{Distributed game graph $\mathcal{G} = (\mathcal{V}_0 \uplus \mathcal{V}_1, \mathcal{E})$, set of initial states $V_{init}$,
    set of target states $V_{goal}$, the unrolling depth $d$}
\KwResult{Output: whether a distributed positional strategy exists to reach $V_{goal}$ from $v_{init}$}
\Begin{
    \textbf{let} clauseList := \emph{getEmptyList()} /* Store all clauses for SAT solvers */\;
    /* STEP 1: Variable creation */\;
    \For{$v = (v_1, \ldots, v_m) \in \mathcal{V}_0 \uplus \mathcal{V}_1$} {
        \textbf{create} $d$ boolean variables $\langle v_1, \ldots, v_m\rangle_1, \ldots, \langle v_1, \ldots, v_m\rangle_d$;
    }
    \For{local control transition $e = (x_i, x_i') \in E_i, x_i \in V_{0_i}$} {
        \textbf{create} boolean variable $\langle e\rangle$;
    }
    /* STEP 2: Initial state constraints */\;
    \For{$v = (v_1, \ldots, v_m) \in \mathcal{V}_0 \uplus \mathcal{V}_1$} {
        \eIf{$(v_1, \ldots, v_m) \in V_{init}$} {
            clauseList.add([$\langle v_1, \ldots, v_m\rangle_1$])\;
        }{
            clauseList.add([$\neg\langle v_1, \ldots, v_m\rangle_1$])\;
        }
    }

    /* STEP 3: Target state constraints */\;
    \For{$v = (v_1, \ldots, v_m) \in \mathcal{V}_0 \uplus \mathcal{V}_1$} {
        \eIf{$(v_1, \ldots, v_m) \in V_{goal}$} {
            clauseList.add([$\langle v_1, \ldots, v_m\rangle_1 \wedge\ldots \wedge \langle v_1, \ldots, v_m\rangle_d$])\;
        }{
            clauseList.add([$\neg \langle v_1, \ldots, v_m\rangle_d$])\;
        }
    }

    /* STEP 4: Unique selection of local transitions (for distributed positional strategy) */\;
    \For{ local control transition $e = (x_i, x_i') \in E_i, x_i \in V_{0_i}$} {
        \For{local transition $e_1 = (x_i, x_{i_1}'), \ldots, e_k = (x_i, x_{i_k}') \in E_i, e_1 \ldots e_k \neq e$} {
            clauseList.add([$\langle e\rangle \Rightarrow (\neg \langle e_1\rangle \wedge \ldots \wedge \neg \langle e_k\rangle )$])\;
        }
    }

    /* STEP 5: If a control vertex is in the attractor (winning region) but not a goal,\;
    an edge should be selected to reach the goal state */\;
    \For{$v = (v_1, \ldots, v_m) \in \mathcal{V}_0$} {
        \For{$v_i, i = 1, \ldots, m$} {
            \If{$v_i \in V_{0_i} \setminus V_{goal}$}{
                \textbf{let} $\bigcup_j e_j$ be the set of local transitions starting from $v_i$ in $G_i$\;
                \If{$\bigcup_j e_j \neq \phi$}{
                    clauseList.add([$(\bigvee_{i = 1 \ldots d} \langle v\rangle_i) \Rightarrow (\bigvee_j \langle e\rangle_j)$])\;
                }
            }
        }
    }

    /* STEP 6: Impact of control edge selection (simultaneous progress) */\;
    \For{$v=(v_1, \ldots, v_m) \in \mathcal{V}_0$} {
        \ForAll{edge combination $(e_1, \ldots, e_m)$: $e_i = (v_i, v_i') \in E_i$ when $v_i \in V_{0_i}$ \texttt{or} $e_i = (v_i, v_i)$ when $x_i \in V_{1_i}$}{
        /* $e_i = (v_i, v_i)$ when $x_i \in V_{1_i}$ are simply dummy edges for ease of formulation */\;
            \For{$j = 1, \ldots, d-1$} {
            clauseList.add([$\langle v_1, \ldots, v_m\rangle_j \Rightarrow
            ((\bigwedge_{\{i|v_i \in V_{0_i}\}}\langle e_i\rangle) \Rightarrow$
                $(\langle v_1', \ldots, v_m'\rangle_{j+1}$])
            }
        }
    }

    /* STEP 7: Impact of environment vertex */\;
    \For{environment vertex $v = (v_1, \ldots, v_m) \in \mathcal{V}_1$} {
        \textbf{let} the set of successors be $\bigcup_i v_i$;
        \For{$j = 1, \ldots, d-1$} {
        clauseList.add([$\langle v\rangle_j \Rightarrow (\bigwedge_i (\langle v_i\rangle_{j+1} \vee \ldots \vee, \langle v_i\rangle_{d})) $]);
        }
    }
    /* STEP 8: Invoke the SAT solver: return \texttt{true} when satisfiable */\;
    \textbf{return} \textbf{invokeSATsolver}(clauseList)\;
}
\caption{PositionalDistributedStrategy\_BoundedSAT\_0\label{algo.SAT.DG0}}
\end{algorithm}




In appendix B, we give an alternate algorithm working with different formulation of
distributed games where in each control location, only one local game can move: a run of the game may execute
multiple local moves until it reaches a state where all local games are in an environment position.
We find this alternative formulation closer to the interleaving semantics of distributed systems.

\section{Conversion from Strategies to Concrete Implementations\label{sec.conversion.from.strategy.to.model}}

Once when the distributed game has returned a positive result, and assume that the result is represented as an IM, the remaining
problem is to check whether the synthesized result can be translated to PISEM and thus further to concrete implementation.
If for each existing action or newly generated FT mechanism, the worst case execution
time is known (with available WCET tools, e.g., AbsInt\footnote{\url{http://www.absint.com/}}), then
we can always answer whether the system is implementable
by a full system rescheduling, which can be complicated.
Nevertheless, based on our system modeling (assumption with a globally synchronized clock), perform modification on the release time  or the deadline on existing actions from the synthesized IM can be translated to a linear constraint system, as in the synthesized
IM each action contains a timing precondition based on program counters.
Here we give a simplified algorithm which performs \emph{local timing modification (LTM)}.
Intuitively, LTM means to perform partitions on either
\begin{enumerate}
    \item the interval $d$ between the deadline of action $\sigma_{\lfloor\frac{a}{b}\rfloor}$ and release time of
            $\sigma_{\lceil\frac{a}{b}\rceil}$, if (a) $\sigma_{\frac{a}{b}}$ exists and (b) $d \neq 0$, or
    \item the execution interval of action $\sigma_{\lfloor\frac{a}{b}\rfloor}$, if $\sigma_{\frac{a}{b}}$ exists.
\end{enumerate}

In the algorithm, we assume that for every action $\sigma_d$, $d\in \mathbf{N}$ where FT mechanisms are not introduced between $\sigma_d$ and $\sigma_{d+1}$ during synthesis, its release-time and deadline should not change; this assumption can be checked later or added explicitly
to the constraint system under solving (but it is not listed here for simplicity reasons).
 Then we solve a constraint system to derive the release time and deadline of all FT actions introduced. Algorithm~\ref{algo.LTM} performs such execution\footnote{Here we list case $2$ only; for case $1$ similar analysis can be applied.}: for simplicity assume at most one FT action exists between two actions $\sigma_i$, $\sigma_{i+1}$; in our implementation this assumption is released:
\begin{list1}
    \item Item (1) performs a interval split between $\sigma_{\lfloor\frac{a}{b}\rfloor}$ and $\sigma_{\frac{a}{b}}$.
    \item Item (3) assigns the deadline of $\sigma_{\lfloor\frac{a}{b}\rfloor}$  to be the original deadline of $\sigma_{\frac{a}{b}}$.
    \item Item (4), (5) ensure that the reserved time interval is greater than the WCET.
    \item Item (6) to (11) introduce constraints from other processes:
    \begin{list1}
        \item Item (6) (7) (8) consider existing actions which do not change the deadline and release time; for these 
             fetch the timing information from PISEM.
        \item Item (9) (10) (11) consider newly introduced
        actions or existing actions which change their deadline and release time; for these actions use variables to construct the constraint.
    \end{list1}
    \item Item (12) is a conservative dependency constraint between $\sigma_{\frac{a}{b}}$ and a send $\sigma_{d}$.
\end{list1}

\begin{algorithm}
\DontPrintSemicolon
\KwData{Original PISEM $\mathcal{S} = (\mathcal{A}, \mathcal{N}, \mathcal{T})$, synthesized IM $S = (A, N)$}
\KwResult{ For each $\sigma_{\frac{a}{b}}$ and $\sigma_{\lfloor\frac{a}{b}\rfloor}$, their execution interval
        $[\alpha_{\frac{a}{b}}, \beta_{\frac{a}{b}})$, $[\alpha_{\lfloor\frac{a}{b}\rfloor}, \beta_{\lfloor\frac{a}{b}\rfloor})$}
For convenience, use ($X \;in\;\mathcal{S}$) to represent the retrieved value $X$ from PISEM $\mathcal{S}$.\;
\Begin{
    \For{ $\sigma_{\frac{a}{b}} [\wedge_{m = 1 \ldots n_A} [pc_{\frac{a}{b},m_{low}},
            pc_{\frac{a}{b},m_{up}})]$ in $\overline{\sigma_i}$ of $A_i$} {
        \textbf{let} $\alpha_{\frac{a}{b}}$, $\beta_{\frac{a}{b}}$, $\alpha_{\lfloor\frac{a}{b}\rfloor}$, $\beta_{\lfloor\frac{a}{b}\rfloor}$
            // Create a new variable for the constraint system \;
        /* Type A constraint: causalities within the process */
        \nl$constraints$.add($\alpha_{\frac{a}{b}} = \beta_{\lfloor\frac{a}{b}\rfloor}$)\;   \label{InRes1}
        \nl$constraints$.add($\alpha_{\lfloor\frac{a}{b}\rfloor} = (\alpha_{\lfloor\frac{a}{b}\rfloor} in\;\mathcal{S})$)\;\label{InRes2}
         \nl$constraints$.add($\beta_{\frac{a}{b}} = (\beta_{\lfloor\frac{a}{b}\rfloor} in\;\mathcal{S})$) \;\label{InRes3}
         \nl$constraints$.add($\beta_{\frac{a}{b}} - \alpha_{\frac{a}{b}} > WCET(\sigma_{\frac{a}{b}})$) \;\label{InRes4}
        \nl$constraints$.add($\beta_{\lfloor\frac{a}{b}\rfloor} - \alpha_{\lfloor\frac{a}{b}\rfloor} >
            WCET(\sigma_{\lfloor\frac{a}{b}\rfloor})$) \;\label{InRe5}
    }
   /* Type B constraint: causalities crossing different processes */\;
   \For{$\sigma_{\frac{a}{b}} [\wedge_{m = 1 \ldots n_A} [pc_{\frac{a}{b},m_{low}},
            pc_{\frac{a}{b},m_{up}})]$ in $\overline{\sigma_i}$ of $A_i$} {
        \For{$\sigma_d [\wedge_{m = 1 \ldots n_A} [pc_{d,m_{low}}, pc_{d,m_{up}})]$ in $\overline{\sigma_j}$ of $A_j$} {
            \eIf{$d\in \mathbf{N}$ and not exists $\sigma_{\frac{x}{y}} \in \overline{\sigma_j}$ where $\lfloor \frac{x}{y}\rfloor = d$}{
                \nl\lIf{$pc_{d,j_{up}} < pc_{\frac{a}{b},j_{low}} $}{$constraints$.add(($\beta_d\;in\;\mathcal{S}) < \alpha_\frac{a}{b}$)}\;\label{InRe6}
                \nl\lIf{$pc_{d,j_{low}} > pc_{\frac{a}{b},j_{up}}$}{$constraints$.add(($\alpha_d\;in\;\mathcal{S}) > \beta_\frac{a}{b}$)}\;\label{InRe7}
                \If{$\sigma_{\frac{a}{b}} := \texttt{send}(pre, ind, n, dest, v, c) \wedge  pc_{d,j_{low}} > pc_{\frac{a}{b},j_{up}}$}{
                    \nl$constraints$.add(($\alpha_d\;in\;\mathcal{S}) > \beta_\frac{a}{b} + WCMTT(n, ind)$)\;\label{InRe8}
                }
            }
            {
                \nl\lIf{$pc_{d,j_{up}} < pc_{\frac{a}{b},j_{low}}$}{$constraints$.add($\beta_d\ < \alpha_\frac{a}{b}$)}\;\label{InRe9}
                \nl\lIf{$pc_{d,j_{low}} > pc_{\frac{a}{b},j_{up}}$}{$constraints$.add($\alpha_d\ > \beta_\frac{a}{b}$)}\;\label{InRe10}
                \If{$\sigma_{\frac{a}{b}} := \texttt{send}(pre, ind, n, dest, v, c) \wedge  pc_{d,j_{low}} > pc_{\frac{a}{b},j_{up}}$}{
                \nl $constraints$.add($\alpha_d > \beta_\frac{a}{b} + WCMTT(n, ind) $) \label{InRe11}
                }
            }
        }
    }
    /* Type C constraint: conservative data dependency constraints */\;
   \For{$\sigma_{\frac{a}{b}} [\wedge_{m = 1 \ldots n_A} [pc_{\frac{a}{b},m_{low}},
            pc_{\frac{a}{b},m_{up}})]$ in $\overline{\sigma_i}$ of $A_i$} {
        \For{$\sigma_d [\wedge_{m = 1 \ldots n_A} (pc_{d,m_{low}}, pc_{d,m_{up}})]$ in $\overline{\sigma_j}$ of $A_j$} {
            \If{$\sigma_{d} := \texttt{send}(pre, ind, n, dest, v, c) \wedge \sigma_{\frac{a}{b}}$ reads variable $c$ $\wedge$ $pc_{d,j_{up}} < pc_{\frac{a}{b},j_{low}}$}{
                \nl$constraints$.add($(\beta_d\;in\;\mathcal{S})+ WCMTT(n, ind) < \alpha_\frac{a}{b}$)\;\label{InRe12}
            }
        }
    }
    \textbf{solve} $constraints$ using (linear) constraint solvers.
}
\caption{LocalTimingModification\label{algo.LTM}}
\end{algorithm}
\section{Implementation and Case Studies \label{sec.case.study}}

For implementation, we have created our prototype software as an Eclipse-plugin,
called \textsc{Gecko}\footnote{\url{http://www6.in.tum.de/~chengch/gecko/}}, which offers an open-platform based on the model-based
approach to facilitate the design, synthesis,
and code generation for fault-tolerant embedded systems.
Currently the engine implements the search-based algorithms, and the SAT-based algorithm is experimented
independently under GAVS\footnote{\url{http://www6.in.tum.de/~chengch/gavs/}}, a tool for visualization and synthesis of games.

To evaluate our approach, here we reuse the example in sec.~\ref{sec.motivating.scenarios} and perform automatic
tuning synthesis for the selected FT mechanisms. The models specified in this section, as well as the \textsc{Gecko} Eclipse-plugin
which generates the result, are available in the website.


\subsection{Example from Section 2\label{sub.sec.exampleA}}

In this example, the user selects a set of FT mechanism templates with the intention to
implement a \emph{fail-then-resend} operation, which is shown in
Figure~\ref{fig:Motivating.Example.FT1}. The selected patterns introduce two additional messages in the system, and the goal is to orchestrate
multiple synchronization points introduced by the FT mechanisms between $\mathcal{A}$ and $\mathcal{B}$ (the timing in FT mechanisms is unknown). The fault model, similar to sec.~\ref{sec.motivating.scenarios}, assumes that in each period at most one message loss occurs.

\begin{figure}
\centering
 \includegraphics[width=0.8\columnwidth]{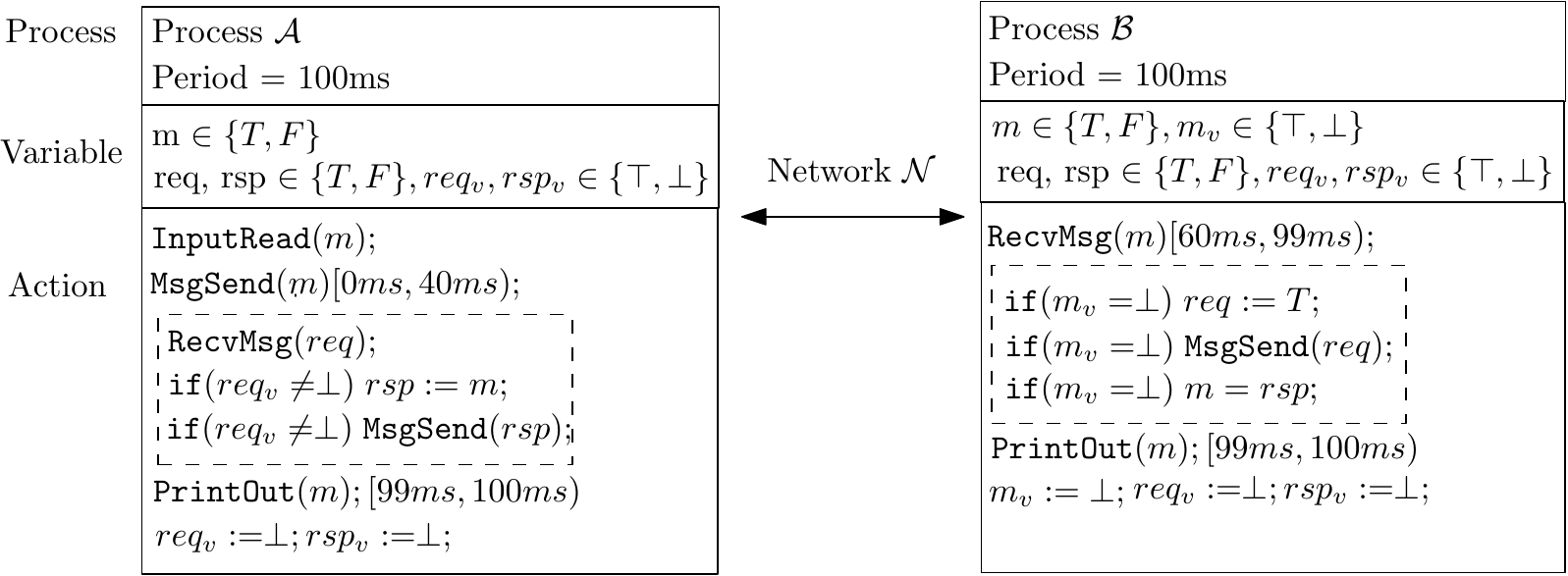}
  \caption{An example where FT primitives are introduced for synthesis.}
 \label{fig:Motivating.Example.FT1}
\end{figure}

Once when \textsc{Gecko} receives the system description (including the fault model) and the reachability specification, it translates the system into
a distributed game. In Figure~\ref{fig:Motivating.Example.FT1.Game}, the set of possible control transitions are listed\footnote{In our implementation,
the PC starts from 0 rather than 1; which is different from the formulation in IM and PISEM.};
the solver generates an appropriate PC-precondition for each action to satisfy the specification. In Figure~\ref{fig:Motivating.Example.FT1.Game},
bold numbers (e.g., $\langle \textbf{0000} \rangle$) indicate the synthesized result. The time line of the
execution (the synthesized result) is explained as follows:
\begin{enumerate}
    \item Process $\mathcal{A}$ reads the input, sends $\verb"MsgSend"(m)$, and waits.
    \item Process $\mathcal{B}$ first waits until it is allowed to execute ($\verb"RecvMsg"(m)$). Then it performs a
    conditional send $\verb"MsgSend"(req)$ and waits.
    \item Process $\mathcal{A}$ performs $\verb"RecvMsg"(req)$, following a conditional send $\verb"MsgSend"(rsp)$.
    \item Process $\mathcal{B}$ performs conditional assignment, which assigns the value of $rsp$ to $m$, if $m_v$ is empty.
\end{enumerate}

\begin{figure}
\centering
 \includegraphics[width=0.8\columnwidth]{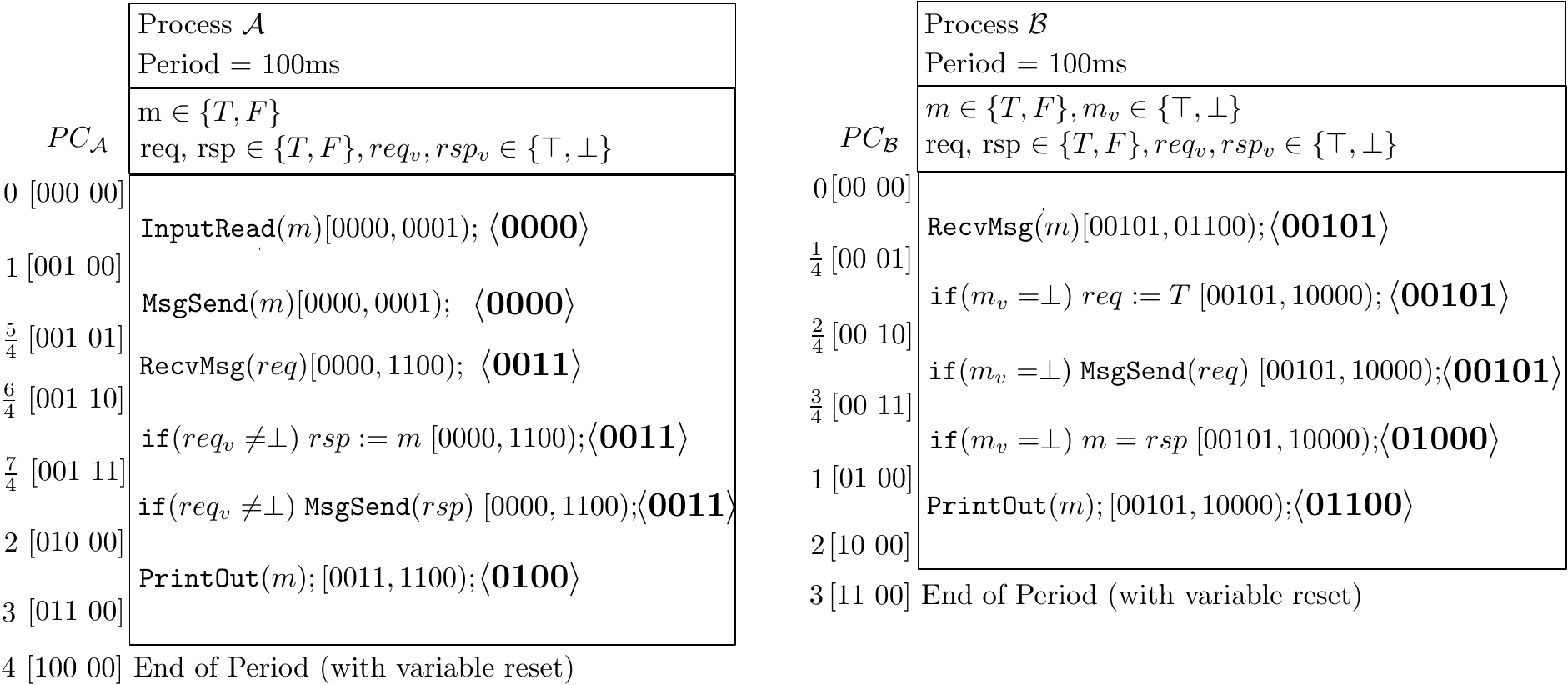}
  \caption{A concept illustration for the control choices in the generated game.}
 \label{fig:Motivating.Example.FT1.Game}
\end{figure}

\begin{figure}
\centering
 \includegraphics[width=0.7\columnwidth]{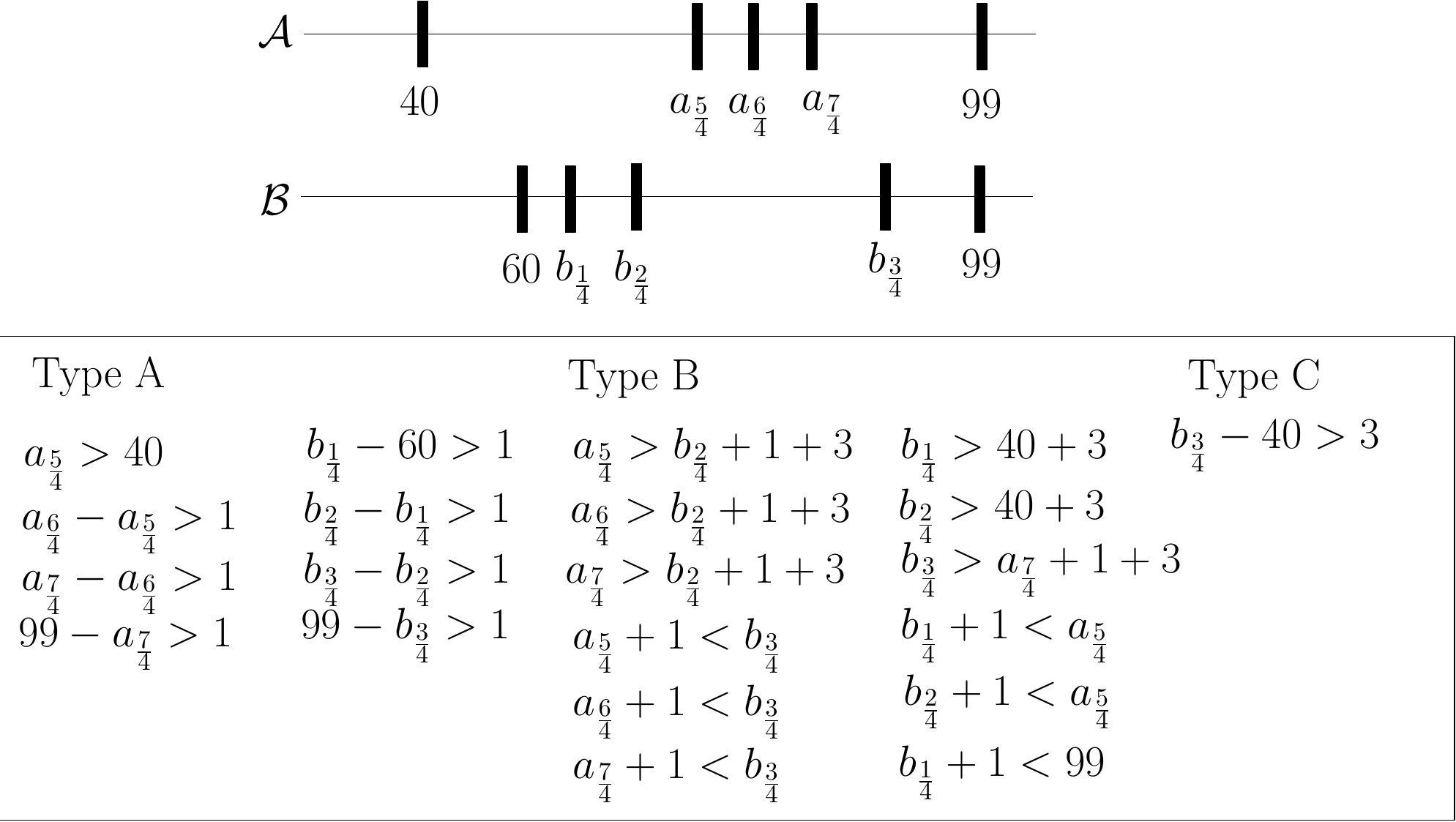}
  \caption{An illustration for applying LTM for the example in sec.~\ref{sub.sec.exampleA}, and the corresponding linear constraints.}
 \label{fig:LSM}
\end{figure}



We continue the case study by stating assumptions over hardware and timing; these can be specified in \textsc{Gecko} as properties of the model.
\begin{enumerate}
    \item Process $\mathcal{A}$ and $\mathcal{B}$ are running on two Texas Instrument LM3S8962 development
            boards\footnote{\url{http://www.luminarymicro.com/products/LM3S8962.html}} under FreeRTOS\footnote{\url{http://www.freertos.org/ }}
            (a real-time operating system), and messages are communicating over a CAN bus.
    \item For each existing or FT action, its WCET on the hardware is $1$ms.
    \item For all messages communicating using the network, the WCMTT is $3$ms.
\end{enumerate}

We apply the LTM algorithm, such that we can generate timing constraints on dedicated hardware;
these timing constraints will be translated to executable C code (based on FreeRTOS).
Figure~\ref{fig:LSM} is used to assist the explanation of LTM, where variables used in the linear constraint solver are specified as follows:
\begin{list1}
\item $a_{\frac{5}{4}}$: release time for action "$\verb"RecvMsg"(req)$" in process $\mathcal{A}$.
\item $a_{\frac{6}{4}}$: release time for action "$\verb"if"(req_v \neq \perp) \;rsp := m$" (and similarly, the deadline for "$\verb"RecvMsg"(req)$") in process $\mathcal{A}$ .
\item $a_{\frac{7}{4}}$: release time for action "$\verb"if"(req_v \neq \perp) \;\verb"MsgSend"(rsp)$" in process $\mathcal{A}$.
\item $b_{\frac{1}{4}}$: release time for action "$\verb"if"(m_v = \perp) \;req := T$" in process $\mathcal{B}$.
\item $b_{\frac{2}{4}}$: release time for action "$\verb"if"(m_v = \perp) \;\verb"MsgSend"(req) $" in process $\mathcal{B}$.
\item $b_{\frac{3}{4}}$: release time for action "$\verb"if"(m_v = \perp) \;m = rsp$" in process $\mathcal{B}$.
\end{list1}

As in process $\mathcal{A}$, there exists a time interval $[40, 99)$ between two existing actions $\verb"MsgSend"(m)$ and $\verb"PrintOut"(m)$, the LTM
algorithm will prefer to utilize this interval than splitting [0, 40), as using $[40, 99)$ generates the least modification on the scheduling.
The generated linear constraint system is also shown in Figure~\ref{fig:LSM}.
An satisfying instance for $(a_{\frac{5}{4}}, a_{\frac{6}{4}}, a_{\frac{7}{4}},
b_{\frac{1}{4}}, b_{\frac{2}{4}}, b_{\frac{3}{4}})$ could be $(72, 77, 82, 62, 67, 87)$; instructions concerning the release time and the deadline for the generated fault-tolerant model can be annotated based on this.

\begin{figure}
\centering
 \includegraphics[width=0.75\columnwidth]{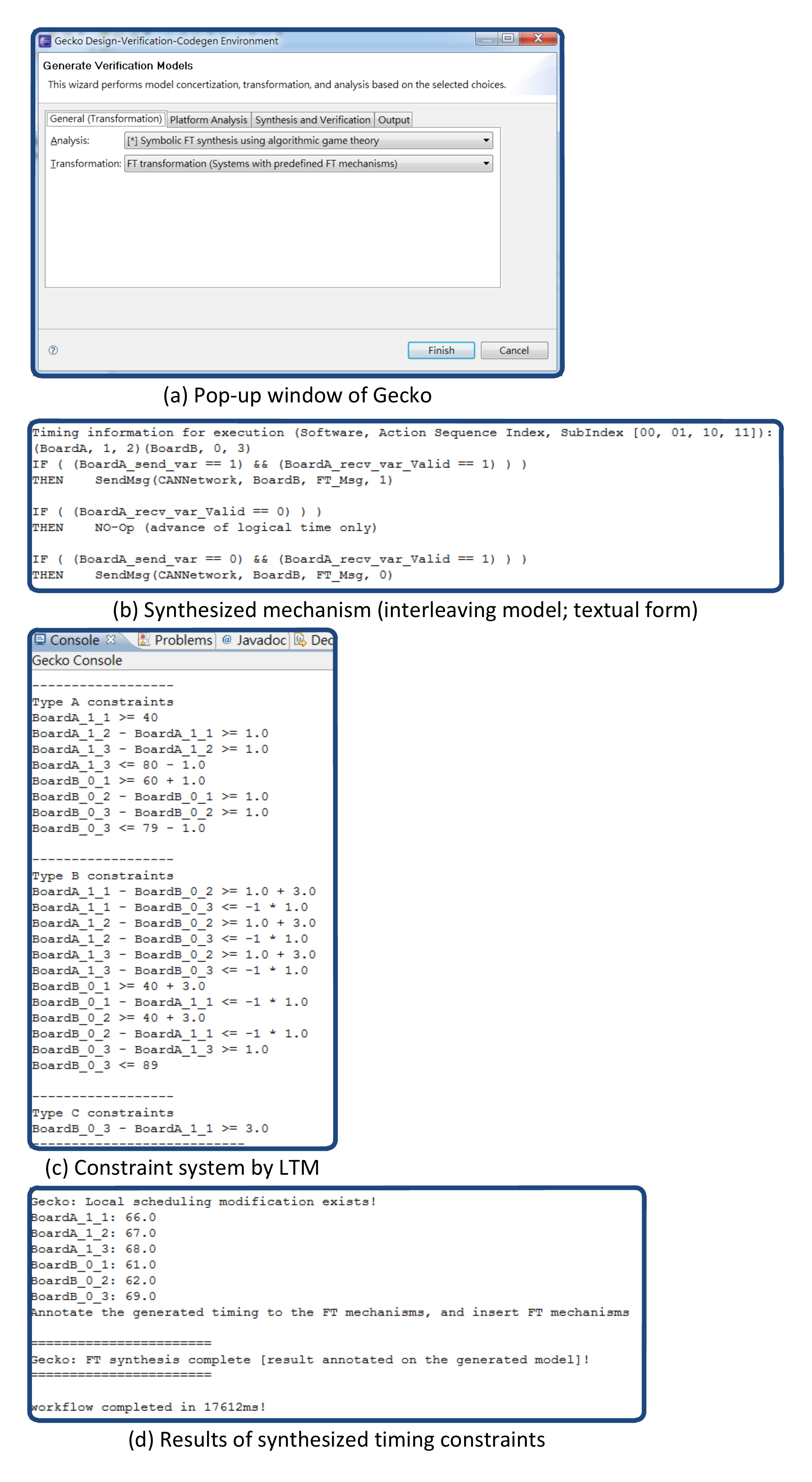}
  \caption{Screenshots of \textsc{Gecko} when executing the example in sec.~\ref{sub.sec.exampleA}.}
 \label{fig:Gecko.Screenshots}
\end{figure}

\subsection{Another Example}
 For the second example, the user selects an inappropriate set of FT mechanisms\footnote{This is originally a design mistake when we specify our FT mechanism patterns; however interesting results are generated.}. Compared to Figure~\ref{fig:Motivating.Example.FT1}, in
 process $\mathcal{A}$ an equality constraint "$\verb"if"(req_v = \perp)$" is used,
 instead of "$\verb"if"(req_v \neq \perp)$". In this way, the combined effect of FT mechanisms in Example B changes
 dramatically from that of Example A:
\begin{list1}
    \item When $\mathcal{B}$ does not receive $m$ from $\mathcal{A}$, it sends a request command.
    \item When $\mathcal{A}$ receives a request message, it does not send the response; this violates the original intention of the designer.
\end{list1}

Surprisingly, \textsc{Gecko} reports a positive result with an interesting sequence! For all FT actions in process $\mathcal{A}$,
they should be executed with the procondition of $PC_\mathcal{B}$ equal to $0000$, meaning that
FT mechanisms in $\mathcal{A}$ are executed before $\verb"RecvMsg"(m)$ in $\mathcal{B}$ starts. In this way,
$\mathcal{A}$ always sends the message $\verb"MsgSend"(rsp)$ containing the value of $m$, and as at most one message loss exists in one period, the specification is satisfied.

\subsection{Discussion}
Concerning the running time of the above two examples, the searching engine (based on forward searching + BDD for intermediate
image storing) is able to report the result in $3$ seconds, while
constraint solving is also relatively fast (within $1$ second). Our engine offers a translation scheme to dump the BDD to
mechanisms in textual form; this process occupies most of the execution time. Note that the NP-completeness result does not
bring huge benefits, as another exponential blow-up caused by the translation from variables to states is also unavoidable:
this is the reason why currently we use a forward search algorithm combining with BDDs in the implementation.

Nevertheless, this does not means that FT synthesis in practice is not possible; our argument is as follows:
\begin{enumerate}
    \item We have indicated that this method is applicable for small examples (similar to the test case in the paper).
    \item To fight with complexity we consider it important to respect the compositional (layered)
        approach used in the design of embedded systems: once when a system have been refined to several subsystems, it is more likely for our approach to be applicable.
\end{enumerate}

\section{Related Work\label{sec.related.work}}

Verification and synthesis of fault tolerance is an active field~\cite{fisman2008verifying,lamport1994specifying,kulkarni2000automating,fmsd:GiraultR09,bernardeschi:2000:fvf,owre:1995:fvf,
arora1998synthesis,dimitrova2009synthesis}.
Among all existing works, we find that the work closest to ours is by Kulkarni et.al.~\cite{kulkarni2000automating}.
Here we summarize the differences in three aspects.
\begin{enumerate}
    \item (Problem) As we are interested in real-time embedded systems, our starting model resembles
        existing formulations used in the real-time community, where time is explicitly stated in the model.
        Their work is more closely to protocol synthesis and the starting model is based on (a composition of) FSMs.
    \item (Approach) As our original intention is to facilitate the pattern selection and tuning process, 
        our approach does not seek for the synthesis of complete FT mechanisms and can be naturally connected to
        games (having a set of predefined moves).
        Contrarily, their results focus on
        synthesizing complete FT mechanisms, for example voting machines or mechanisms for Byzantine generals' problem.
    \item (Algorithm) To apply game-based approach for embedded systems, our algorithms includes the game translation (timing abstraction) and constraint solving (for implementability). In addition, our game formulation enables us to connect and modify
        existing and rich results in algorithmic game solving: for instance, we reuse the idea of
        witness in~\cite{alur:2005:symbolic} for distributed games, and it is likely to establish connections between incomplete methods for distributed games and algorithms for games of imperfect information~\cite{de2006lattice}.
\end{enumerate}
A recent work by Girault et.al.~\cite{fmsd:GiraultR09} follows similar methodologies (i.e., on protocol level FSMs)
to~\cite{kulkarni2000automating} and performs discrete controller synthesis for fault-tolerance; the difference
between our work and theirs follows the argument above.

Lastly, we would like to comment on the application of algorithmic games.
Several important work for game analysis or LTL synthesis can be found
from Bloem and Jobstmann et.al. (the program repair framework~\cite{jobstmann2005program}),
Henzinger and Chatterjee et.al. (Alpaga and the interface synthesis~\cite{berwanger2009alpaga,doyen2008interface}),
or David and Larson et.al. (Uppaal TIGA~\cite{behrmann2007uppaal}).
One important distinction is that due to our system modeling, we naturally start from a problem of solving distributed games
and need to fight with undecidability immediately, while the above works are all based on a non-distributed setting.

\section{Concluding Remarks\label{sec.concluding.remarks}}

\begin{figure}
\centering
 \includegraphics[width=0.8\columnwidth]{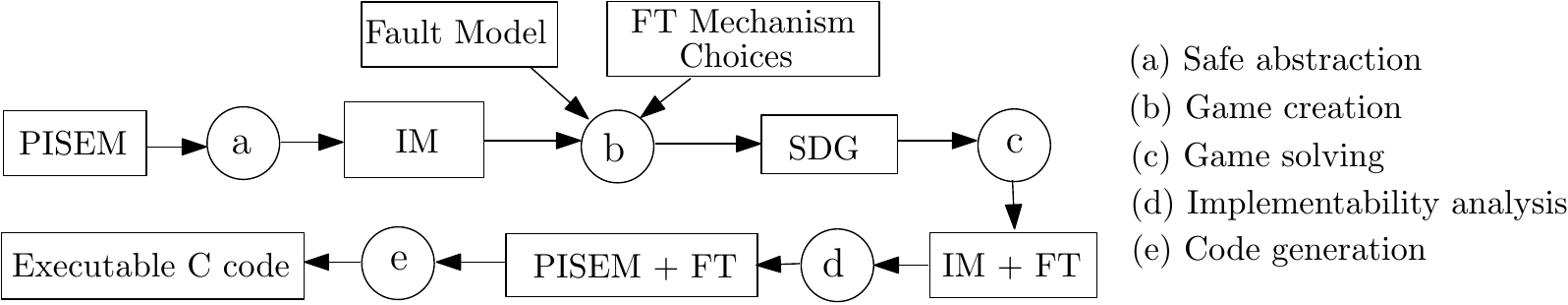}
  \caption{Concept illustration of the overall approach for fault-tolerant synthesis;
  IM+FT means that an IM model is equipped with FT mechanisms.}
 \label{fig:Flow}
\end{figure}

This paper presents a comprehensive approach  (see Figure~\ref{fig:Flow} for concept illustration) for the augmenting of fault-tolerance for real-time distributed systems under a game-theoretic framework.
We use simple yet close-to-reality models (PISEM) as a starting point of FT synthesis,
translate PISEM to distributed games with safe abstractions, perform game solving and later implementability analysis.
The above flow is experimented in a prototype, enabling us to utilize model-based development framework to perform FT synthesis.
These mechanisms may have interesting applications in distributed process control and robotics.
To validate our approach, we plan to increase the maturity of our prototype system and study new algorithms for performance gains.

\subsubsection*{Acknowledgments.} The first author is supported by the DFG Graduiertenkolleg 1480 (PUMA).

\bibliographystyle{abbrv}
\bibliography{refs}

\begin{thebibliography}{10}

\bibitem{alur:2005:symbolic}
R.~Alur, P.~Madhusudan, and W.~Nam.
\newblock Symbolic computational techniques for solving games.
\newblock {\em International Journal on Software Tools for Technology Transfer
  (STTT)}, 7(2):118--128, 2005.

\bibitem{arora1998synthesis}
A.~Arora, P.~Attie, and E.~Emerson.
\newblock Synthesis of fault-tolerant concurrent programs.
\newblock In {\em Proceedings of the 7th annual ACM symposium on Principles of
  distributed computing (PODC'98)}, pages 173--182. ACM, 1998.

\bibitem{behrmann2007uppaal}
G.~Behrmann, A.~Cougnard, A.~David, E.~Fleury, K.~Larsen, and D.~Lime.
\newblock Uppaal-tiga: Time for playing games!
\newblock In {\em Proceedings of the 19th international conference on Computer
  aided verification (CAV'07)}, volume 4590 of {\em LNCS}, pages 121--125.
  Springer-Verlag, 2007.

\bibitem{bernardeschi:2000:fvf}
C.~Bernardeschi, A.~Fantechi, and L.~Simoncini.
\newblock Formally verifying fault tolerant system designs.
\newblock {\em The Computer Journal}, 43(3):191--205, 2000.

\bibitem{berwanger2009alpaga}
D.~Berwanger, K.~Chatterjee, M.~De~Wulf, L.~Doyen, and T.~Henzinger.
\newblock Alpaga: A tool for solving parity games with imperfect information.
\newblock In {\em Proceedings on the 16th International Conference on Tools and
  Algorithms for the Construction and Analysis of Systems (TACAS'10)}, volume
  6015 of {\em LNCS}, pages 58--61. Springer-Verlag, 2009.

\bibitem{Cheng:2009:Timing}
C.-H. Cheng, C.~Buckl, J.~Esparza, and A.~Knoll.
\newblock Modeling and verification for timing satisfaction of fault-tolerant
  systems with finiteness.
\newblock In {\em Proceedings of the 13th IEEE/ACM International Symposium on
  Distributed Simulation and Real Time Applications (DS-RT'09)}, pages
  208--215. IEEE Computer Society, 2009.

\bibitem{davis:2007:controller}
R.~Davis, A.~Burns, R.~Bril, and J.~Lukkien.
\newblock Controller area network (can) schedulability analysis: Refuted,
  revisited and revised.
\newblock {\em Real-Time Systems}, 35(3):239--272, 2007.

\bibitem{de2006lattice}
M.~De~Wulf, L.~Doyen, and J.~Raskin.
\newblock A lattice theory for solving games of imperfect information.
\newblock In {\em Proceedings of the 9th International Workshop in Hybrid
  Systems: Computation and Control}, volume 3927 of {\em LNCS}, pages 153--168.
  Springer-Verlag, 2006.

\bibitem{dimitrova2009synthesis}
R.~Dimitrova and B.~Finkbeiner.
\newblock Synthesis of fault-tolerant distributed systems.
\newblock In {\em Proceedings of the 7th International Symposium on Automated
  Technology for Verification and Analysis (ATVA'09)}, volume 5799 of {\em
  LNCS}, pages 321--336. Springer-Verlag, 2009.

\bibitem{doyen2008interface}
L.~Doyen, T.~Henzinger, B.~Jobstmann, and T.~Petrov.
\newblock Interface theories with component reuse.
\newblock In {\em Proceedings of the 8th ACM international conference on
  Embedded software (EMSOFT'08)}, pages 79--88. ACM, 2008.

\bibitem{eidson2002ieee}
J.~Eidson, M.~Fischer, and J.~White.
\newblock Ieee 1588 standard for a precision clock synchronization protocol for
  networked measurement and control systems.
\newblock In {\em 34 th Annual Precise Time and Time Interval (PTTI) Meeting},
  pages 243--254, 2002.

\bibitem{fisman2008verifying}
D.~Fisman, O.~Kupferman, and Y.~Lustig.
\newblock On verifying fault tolerance of distributed protocols.
\newblock In {\em Proceedings of the 14th international conference on Tools and
  algorithms for the construction and analysis of systems (TACAS'08)}, volume
  4963 of {\em LNCS}, pages 315--331, Berlin, Heidelberg, 2008.
  Springer-Verlag.

\bibitem{fmsd:GiraultR09}
A.~Girault and {\'E}.~Rutten.
\newblock Automating the addition of fault folerance with discrete controller
  synthesis.
\newblock {\em Formal Methods in System Design}, 35(2):190--225, 2009.

\bibitem{hanmer:2007:patterns}
R.~Hanmer.
\newblock {\em Patterns for Fault Tolerant Software}.
\newblock Wiley Software Patterns Series, 2007.

\bibitem{jobstmann2005program}
B.~Jobstmann, A.~Griesmayer, and R.~Bloem.
\newblock Program repair as a game.
\newblock In {\em Proceedings of the 17th International Conference on Computer
  Aided Verification (CAV'05)}, volume 3576 of {\em LNCS}, pages 226--238.
  Springer-Verlag, 2005.

\bibitem{kulkarni2000automating}
S.~Kulkarni and A.~Arora.
\newblock Automating the addition of fault-tolerance.
\newblock In {\em Proceedings of the 6th International Symposium on Formal
  Techniques in Real-Time and Fault-Tolerant Systems (FTRTFT'00)}, volume 1926
  of {\em LNCS}, pages 339--359. Springer-Verlag, 2000.

\bibitem{lamport1994specifying}
L.~Lamport and S.~Merz.
\newblock Specifying and verifying fault-tolerant systems.
\newblock In H.~Langmaack, W.-P. de~Roever, and J.~Vytopil, editors, {\em
  Proceedings of the 3rd International Symposium on Formal Techniques in
  Real-Time and Fault-Tolerant Systems (FTRTFT'94)}, volume 863 of {\em LNCS},
  pages 41--76. Springer-Verlag, 1994.

\bibitem{mohalik:2003:distributed}
S.~Mohalik and I.~Walukiewicz.
\newblock Distributed games.
\newblock In {\em Proceedings of the 23rd International Conference on
  Foundations of Software Technology and Theoretical Computer Science
  (FSTTCS'03)}, volume 2914 of {\em LNCS}, pages 338--351. Springer-Verlag,
  2003.

\bibitem{owre:1995:fvf}
S.~Owre, J.~Rushby, N.~Shankar, and F.~von Henke.
\newblock Formal verification for fault-tolerant architectures: Prolegomena to
  the design of pvs.
\newblock {\em IEEE Transactions on Software Engineering}, 21(2):107--125,
  1995.

\bibitem{pop:2008:timing}
T.~Pop, P.~Pop, P.~Eles, Z.~Peng, and A.~Andrei.
\newblock Timing analysis of the flexray communication protocol.
\newblock {\em Real-Time Systems}, 39(1):205--235, 2008.

\bibitem{tindell1995calculating}
K.~Tindell, A.~Burns, and A.~Wellings.
\newblock {Calculating controller area network (CAN) message response times}.
\newblock {\em Control Engineering Practice}, 3(8):1163--1169, 1995.

\end{thebibliography}


\appendix

\section*{A. The Need of Reestimating the WCMTT in CAN Buses when FT messages are Introduced}

To have an understanding whether newly introduced FT messages can change the existing networking
behavior is both hardware and configuration dependent. In this section, we only describe
the behavior when FT messages are introduced in a Control Area Network (CAN bus),
which is widely used in automotive and automation domains. Here we give configuration settings (conditions)
such that newly introduced messages do \emph{\textbf{not}} influence the existing networking behavior.
For details concerning the timing analysis of CAN, we refer readers to~\cite{tindell1995calculating,davis:2007:controller}.

\begin{prop}
Given an ideal CAN bus with message priority from 1 to $k$, when the three conditions are satisfied:
\begin{enumerate}
    \item No message with priority $k$ is not used in the existing network.
    \item The predefined size of the message for priority $k$ is larger than all messages with priority 1 to $k-1$,
    \item All FT messages are having priority smaller or equal to $k$, and the size is less than the message size stated in (2).
\end{enumerate}
When the WCMTT is derived using the analysis in~\cite{tindell1995calculating}, concerning
the WCMTT of all messages with priority $1$ to $k$, it is indifferent to the newly introduced messages.
\end{prop}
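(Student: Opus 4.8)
The plan is to unfold the recursive worst-case response-time formula of Tindell et al.~\cite{tindell1995calculating} and to check, term by term, that none of the quantities on which the WCMTT of an existing message depends is perturbed by the insertion of the FT messages. Recall that in that analysis the worst-case response time of a message~$m$ of priority~$p$ has the form $R_p = J_p + w_p + C_p$, where $C_p$ is the frame transmission time, $J_p$ the queuing jitter, and $w_p$ is the least fixed point of the recurrence
\[
w_p \;=\; B_p \;+\; \sum_{j \in hp(p)} \left\lceil \frac{w_p + J_j + \tau_{bit}}{T_j} \right\rceil C_j ,
\]
with $hp(p)$ the set of messages of strictly higher priority than~$p$ and $B_p = \max\{\, C_j : j \text{ of priority lower than } p \,\}$ the blocking term (the largest lower-priority frame that may already occupy the bus). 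Consequently $w_p$, and hence the WCMTT of~$m$, is a monotone function of exactly two pieces of data: the transmission times $C_j$ of the strictly-higher-priority messages, and the single number $B_p$.

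I would then argue that conditions~(1)--(3) leave both pieces of data unchanged for every existing message~$m$ of priority $p \in \{1,\dots,k-1\}$ --- by condition~(1) every existing message lies in this range. By condition~(3) each FT message has priority~$k$ or lower, so it is never strictly higher priority than~$m$; hence $hp(p)$ and the interference sum are untouched. For the blocking term, the original estimate already reserves the priority-$k$ slot, whose frame size by condition~(2) is at least that of any existing message of priorities $1,\dots,k-1$, so $B_p$ equals this reserved size; since by condition~(3) every FT frame is strictly smaller than it, adding the FT messages cannot increase the maximum, and $B_p$ is unchanged. As the recurrence is monotone in its data and the data are fixed, its least solution --- and thus $R_p$ and the WCMTT --- is unchanged; the priority-$k$ slot itself is handled the same way, its interference set being $\{1,\dots,k-1\}$ and its blocking term the fixed convention value.

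The step I expect to be the main obstacle is the fixed-point bookkeeping: one must verify that the least solution of the recurrence really is a function only of the data isolated above --- in particular that the ceiling terms contributed by the newly added FT messages never appear in any \emph{existing} message's equation, and that the auxiliary quantities in Tindell's refinements (the length of the busy period over which the iteration is run, the jitter propagation, and the error/retransmission term if it is included) depend on the low-priority traffic solely through $B_p$. A secondary subtlety is to fix, up front, the convention relating ``priority smaller or equal to~$k$'' to the high/low-priority ordering, so that condition~(3) indeed forces every FT message into the blocking term, and never into the interference sum, of every existing message. Once these points are nailed down the remaining argument is the short monotonicity observation sketched above; note also that schedulability of the FT messages themselves is irrelevant here, since the claim concerns only the preservation of the WCMTT of the messages of priority~$1$ to~$k$.
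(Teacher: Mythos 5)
Your proposal is correct and follows essentially the same route as the paper's (outline) proof: the paper likewise isolates the two channels through which Tindell's analysis can change --- the lower-priority blocking term and the higher-priority interference sum --- and argues that condition (3) keeps FT messages out of the interference set while conditions (1) and (2) ensure the blocking term was already dominated by the reserved priority-$k$ frame size. Your version is simply more explicit (writing out the fixed-point recurrence and noting the priority-ordering convention and the vacuity of the priority-$k$ case under condition (1), which the paper handles by observing that no such message exists), so no substantive divergence or gap remains.
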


\begin{proof} (Outline) Based on the algorithm in~\cite{tindell1995calculating}, for a message with priority $i \in \{1, \ldots, k\}$,
its timing behavior only changes with two factors:
\begin{itemize}
    \item (a) The blocking time caused by a message with lower priority $j > i$ changes: when a message with lower priority
    changes to a bigger message size, the blocking time increases.
    \item (b) The interference from messages with higher priority $j < i$.
\end{itemize}

We proceed the argument as follows.
 \begin{itemize}
   \item For timing changes due to (b), as FT messages are all with lower priorities (based on condition 3), they do not create or increase interferences with this type.
    \item For timing changes due to (a), we separate two two cases:
    \begin{itemize}
        \item As the size of all FT messages are smaller or equal than the message size specified in (2), then the timing
                behavior for messages with priority 1 to $k-1$ do not change.
        \item Lastly, although the message with priority $k$ can change as it can now be blocked by a lower priority message, such message does not exist based on condition (1).
  \end{itemize}
 \end{itemize}

\end{proof}

By the above information, in our framework we may assume that all messages transmitted in a CAN bus are with lowest priority $k+1$,
and then perform a simple timing analysis at Step A.2.2 before creating the game;
in this way, the problem is Step A.2.2 (\textbf{[Problem 2]}) is safe to neglect.

\section*{B. Algorithm Modification for Interleaving of Local Games}

\begin{algorithm}
\DontPrintSemicolon
\KwData{Distributed game graph $\mathcal{G} = (\mathcal{V}_0 \uplus \mathcal{V}_1, \mathcal{E})$, set of initial states $V_{init}$,
    set of target states $V_{goal}$, the unrolling depth $d$}
\KwResult{Output: whether a distributed positional strategy exists to reach $V_{goal}$ from $v_{init}$}
\Begin{
    \textbf{let} clauseList := \emph{getEmptyList()} /* Store all clauses for SAT solvers */\;
    \textbf{execute} STEP 1 to STEP 5 mentioned in PositionalDistributedStrategy\_BoundedSAT\_0\;

    /* STEP 6: Impact of control edge selection */\;
    \For{ local control transition $e = (x_i, x_i') \in E_i, x_i \in V_{0_i}$} {
        \For{$v=(v_1, \ldots, v_m) \in \mathcal{V}_0 \uplus \mathcal{V}_1$ where $x_i = v_i$} {
            \For{$j = 1, \ldots, d-1$} {
            clauseList.add([$\langle e\rangle \Rightarrow (\langle v_1, \ldots,v_{i-1}, v_{i}, v_{i+1}, \ldots, v_m\rangle_j \Rightarrow$
                $(\langle v_1, \ldots,v_{i-1}, x_i', v_{i+1}, \ldots, v_m\rangle_{j+1} \vee \ldots \vee \langle v_1,
                    \ldots,v_{i-1}, x_i', v_{i+1}, \ldots, v_m\rangle_{d}) )$])
            }
        }
    }

    /* STEP 7: Impact of environment vertex */\;
    \For{environment vertex $v = (v_1, \ldots, v_m) \in \mathcal{V}_1$} {
        \textbf{let} the set of successors be $\bigcup_i v_i$;
        \For{$j = 1, \ldots, d-1$} {
         clauseList.add([$\langle v\rangle_j \Rightarrow (\bigwedge_i (\langle v_i\rangle_{j+1} \vee \ldots \vee, \langle v_i\rangle_{d})) $])
        }
    }
    /* STEP 8: Invoke the SAT solver: return \texttt{true} when satisfiable */\;
    \textbf{return} \textbf{invokeSATsolver}(clauseList)\;
}
\caption{PositionalDistributedStrategy\_ControlInLocalGameInterleaving\_BoundedSAT\_0\label{algo.SAT.DG0.Interleaving}}
\end{algorithm}

\noindent \textbf{(Remark)} Compared to Mohalik and Walukiwitz's formulation, as in this formulation, only one subgame in a control position can
proceed a move, we do not need to create constraints considering all possible combinations in STEP 6, which is required in the algorithm PositionalDistributedStrategy\_BoundedSAT\_0 (sec.~\ref{sec.game.solving}).

\section*{C. Brief Instructions on Executing Examples in \textsc{Gecko}}


Here we illustrate how FT synthesis is done in our prototype tool-chain using the example in sec.~\ref{sub.sec.exampleA}:
first we perform model transformation and generate a new model which equips FT mechanisms.
Then executable code can be generated based on performing code-generation over the specified model (optional).
Once when the \textsc{Gecko} Eclipse add-on is installed (see our website for instructions), proceed with the following steps:
\begin{list1}
    \item The model (\verb"F01_FT_Synthesis_Correct.xmi") for sec.~\ref{sub.sec.exampleA} contains the fault model, the hardware used in the system, and
            pre-inserted FT mechanism blocks, but their timing information is unknown.
    \item Right click on the selected model under synthesis, choose "\verb"Verification"" \verb"->"\\ "\verb"Gecko: Model Transformation and Analysis"". A pop-up window similar to fig.~\ref{fig:Gecko.Screenshots}a is available.
    \item In the \verb"General" tab, choose \verb"Symbolic FT synthesis using"\\ \verb"algorithmic game theory".
    \item In the \verb"Platform Analysis" tab, set up the default actor WCET and network WCMTT to be 1 and 3.
    \item In the \verb"Output" tab, select the newly generated output file.
    \item Press "\verb"Finish"". Results of intermediate steps are shown in the console, including FT mechanisms as interleaving models
            (fig.~\ref{fig:Gecko.Screenshots}b), constraints derived from LTM (fig.~\ref{fig:Gecko.Screenshots}c),
            and results of timing (fig.~\ref{fig:Gecko.Screenshots}d) after executing the constraint solver.
            \begin{list1}
                \item In fig.~\ref{fig:Gecko.Screenshots}b, the mechanism dumped from the engine specifies the action\\ "$\verb"if"(req_v \neq \perp) \;\verb"MsgSend"(rsp)$": note that this action implicitly implies that when $req_v = \perp$, a null-op which only updates the program counter should be executed; this is captured by our synthesis framework.
                \item In fig.~\ref{fig:Gecko.Screenshots}d, the total execution time is roughly 18s because the engine dumps the result back to mechanisms in textual form, which consumes huge amount of time: executing the game and performing constraint solving take only a small portion of the total time.
            \end{list1}
    \item When the model is generated, users can again right click on the newly generated model, and select \verb"Code Generation"
            in the tab \verb"General": the code generator then combines the model description and
            software templates for dedicated hardware and OS to create executable C code.
\end{list1}

\end{document}